  \DeclareSymbolFont{fourierletters}{FML}{fncmi}{m}{it}
  \DeclareMathSymbol{\fP}{\mathalpha}{fourierletters}{`E}
  \DeclareSymbolFont{Nperm}{OML}{cmss}{bx}{it}
  \DeclareMathSymbol{\nw}{\mathalpha}{Nperm}{`w}
  \let\save@mathaccent\mathaccent
  \newcommand*\if@single[3]{%
  \setbox0\hbox{${\mathaccent"0362{#1}}^H$}%
  \setbox2\hbox{${\mathaccent"0362{\kern0pt#1}}^H$}%
  \ifdim\ht0=\ht2 #3\else #2\fi
  }
  \newcommand*\rel@kern[1]{\kern#1\dimexpr\macc@kerna}
  \newcommand*\widebar[1]{\@ifnextchar^{{\wide@bar{#1}{0}}}{\wide@bar{#1}{1}}}
  \newcommand*\wide@bar[2]{\if@single{#1}{\wide@bar@{#1}{#2}{1}}{\wide@bar@{#1}{#2}{2}}}
  \newcommand*\wide@bar@[3]{%
  \begingroup
  \def\mathaccent##1##2{%
  \let\mathaccent\save@mathaccent
  \if#32 \let\macc@nucleus\first@char \fi
  \setbox\z@\hbox{$\macc@style{\macc@nucleus}_{}$}%
  \setbox\tw@\hbox{$\macc@style{\macc@nucleus}{}_{}$}%
  \dimen@\wd\tw@
  \advance\dimen@-\wd\z@
  \divide\dimen@ 3
  \@tempdima\wd\tw@
  \advance\@tempdima-\scriptspace
  \divide\@tempdima 10
  \advance\dimen@-\@tempdima
  \ifdim\dimen@>\z@ \dimen@0pt\fi
  \rel@kern{0.6}\kern-\dimen@
  \if#31
  \overline{\rel@kern{-0.6}\kern\dimen@\macc@nucleus\rel@kern{0.4}\kern\dimen@}%
  \advance\dimen@0.4\dimexpr\macc@kerna
  \let\final@kern#2%
  \ifdim\dimen@<\z@ \let\final@kern1\fi
  \if\final@kern1 \kern-\dimen@\fi
  \else
  \overline{\rel@kern{-0.6}\kern\dimen@#1}%
  \fi
  }%
  \macc@depth\@ne
  \let\math@bgroup\@empty \let\math@egroup\macc@set@skewchar
  \mathsurround\z@ \frozen@everymath{\mathgroup\macc@group\relax}%
  \macc@set@skewchar\relax
  \let\mathaccentV\macc@nested@a
  \if#31
  \macc@nested@a\relax111{#1}%
  \else
  \def\gobble@till@marker##1\endmarker{}%
  \futurelet\first@char\gobble@till@marker#1\endmarker
  \ifcat\noexpand\first@char A\else
  \def\first@char{}%
  \fi
  \macc@nested@a\relax111{\first@char}%
  \fi
  \endgroup
  }
  \newtheorem{theorem}{Theorem}[section]
  \newcommand{\tikzcircle}[2][red,fill=red]{\tikz[baseline=-0.5ex]\draw[#1,radius=#2] (0,0) circle ;}%
  \newcommand{\stable}{\tikzcircle[black,fill=black]{0.75pt}}
  \newcommand{\cp}{\mathbin{\tikz [x=1.4ex,y=1.4ex,line width=.1ex] \draw (-0.5,-0.5) -- (0.5,0.5) (-0.5,0.5) -- (0.5,-0.5);}}%
  \numberwithin{equation}{section}
  \newcommand{\physF}{\boldsymbol{\mathfrak{f}}} %
  \newcommand{\physB}{\mathscr{b}} 
  \newcommand{\physE}{{\mathpzc{E}}} 
  \newcommand{\physh}{\mathscr{h}} 
  \newcommand{\physe}{\mathbscr{e}} 
  \newcommand{\physf}{\boldsymbol{\mathfrak{f}}} 
  \newcommand{\physm}{\mathscr{m}} %
  \newcommand{\physu}{\mathbscr{u}}
  \newcommand{\physw}{\mathscr{w}}
  \newcommand{\rtple}[1]{\boldsymbol{#1}}
  \renewcommand{\u}[1]{\boldsymbol{#1}}
  \renewcommand{\t}[1]{\tilde{#1}}
  \renewcommand{\b}[1]{\mathbb{#1}}
  \renewcommand{\c}[1]{\mathcal{#1}}
  \newcommand{\pr}[1]{\left( #1 \right)}
  \newcommand{\Dro}{\dot{\rho}}
  \newcommand{\DDro}{\ddot{\rho}}
  \newcommand{\Drop}{\Dro^{+}}
  \newcommand{\Drom}{\Dro^{-}}
  \newcommand{\Dl}{\dot{l}}
  \newcommand{\DF}{\dot{F}}
  \newcommand{\Dst}{\mathcal{D}^{\stable}}
  \newcommand{\Dnt}{\mathcal{D}^{\odot}}
  \newcommand{\conf}{\boldsymbol{\kappa}}
  \newcommand{\confp}{\tilde{\boldsymbol{\kappa}}}
  \DeclareMathAlphabet{\mathvz}{OT1}{LinuxBiolinumT-OsF}{m}{sl}
  \DeclareMathAlphabet{\mathlib}{OT1}{LinuxLibertineT-OsF}{m}{it}
  \DeclareMathAlphabet{\mathbio}{OT1}{LinuxBiolinumT-OsF}{m}{it}
  \DeclareMathAlphabet{\mathpzc}{OT1}{pzc}{m}{it}
  \DeclareFontFamily{U}{tipa}{}
  \DeclareFontShape{U}{tipa}{m}{n}{<->tipa10}{}
  \newcommand{\arc@char}{{\usefont{U}{tipa}{m}{n}\symbol{62}}}%
  \newcommand{\arc}[1]{\mathpalette\arc@arc{#1}}
  \newcommand{\arc@arc}[2]{%
    \sbox0{$\m@th#1#2$}%
    \vbox{
      \hbox{\resizebox{\wd0}{\height}{\arc@char}}
      \nointerlineskip
      \box0
    }%
  }
\begin{document}

\begin{frontmatter}

\title{Angle-independent optimal adhesion in plane peeling of thin elastic films at large surface roughnesses}


\author[address]{Weilin Deng}
\author[address]{Haneesh Kesari\corref{mycorrespondingauthor}}
\cortext[mycorrespondingauthor]{Corresponding author}
\ead{haneesh\_kesari@brown.edu}

\address[address]{School of Engineering, Brown University, Providence, RI 02912}

\numberwithin{equation}{section}

\begin{abstract}
  Adhesive peeling of a thin elastic film from a substrate is a classic problem in mechanics. However, many of the investigations on this topic to date have focused on peeling from substrates with flat surfaces. In this paper, we study the problem of peeling an elastic thin film from a rigid substrate that has periodic surface undulations. We allow for contact between the detached part of the film with the substrate. We give analytical results for computing the equilibrium force given the true peeling angle, which is the angle at which the detached part of the film leaves the substrate. When there is no contact we present explicit results for computing the true peeling angle from the substrate's profile and for determining an equilibrium state's stability solely from the substrate's surface  curvature. The general results that we derive for the case involving contact allow us to explore the regime of peeling at large surface roughnesses. Our analysis of this regime reveals that the peel-off force can be made to become independent of the peeling direction by roughening the surface. This result is in stark contrast to results from peeling on flat surfaces, where the peel-off force strongly depends on the peeling direction. Our analysis also reveals that in the large roughness regime the peel-off force achieves its theoretical upper bound, irrespective of the other particulars of the substrate's surface profile.
\end{abstract}

\begin{keyword}
Adhesion \sep Thin film \sep Peeling \sep Surface roughness
\PACS 68.35.Np\sep 68.35.Ct
\end{keyword}

\end{frontmatter}


\section{Introduction}
\label{sec:Introduction}

Bumpy protrusions on the surface of the lotus leaf at the small length scales have been shown to endow the leaf with the property of super-hydrophobicity (non-wettability) at the large scales~\cite{barthlott1997purity}. 
This non-wetting property is the source of the lotus leaf's acclaimed self-cleaning ability. 
Similarly, the periodic, small-scale, wavy undulations on the skin of some sharks are thought to reduce the skin's frictional drag~\cite{wen2014biomimetic}, as shown in Figure~\ref{fig:Example}a. 
Such intriguing links between small-scale mechanical structures and large-scale physical properties, however, are not limited to biological systems. 
The small-scale topography of natural surfaces, which is generally stochastic, is often referred to as surface roughness.
When separating two contacting surfaces, the surfaces' roughness is typically thought to reduce the adhesion between them~\cite{fuller1975effect,levins1995impact,quon1999measurement,rabinovich2000adhesion}. However, there are cases in which roughness is seen to actually enhance adhesion ~\cite{briggs1977effect,kesari2010role}.
Small-scale mechanical design can not just modulate surface properties at the large-scale but, in fact, give rise to completely new phenomenon at the large scales.
For example, it was shown by Kesari \textit{et al.}~\cite{kesari2011effective,kesari2010role,deng2019depth,deng2019effect,deng2017molecular} that small-scale surface topography can give rise to the phenomenon of depth-dependent hysteresis at the large scales.

In engineering sciences, there is currently tremendous interest in creating materials with novel properties, such as materials with negative Poisson's ratios, through the use of small-scale, intricate lattice-based mechanical designs (e.g., Figure~\ref{fig:Example}b).
The recent popularity of 3D printing is, presumably, primarily responsible for galvanizing such interests.
The concept of modulating material properties at the large scale through incorporating mechanical designs at the small-scale---in contrast to, say, through the use of chemical or metallurgical treatments---has been of interest, in fact, for the past several decades in the composites and the mathematical homogenization communities~\cite{michel1999effective,ericksen2012homogenization,castaneda2002second}.
However, the focus in engineering and mathematics has primarily been on modulating bulk material properties, such as elastic stiffnesses and thermal conductivities.
Needless to add, surface physical properties, such as adhesion, friction, etc., are no less important than bulk properties, and, as can be gleaned from the  examples previously mentioned,  small-scale mechanical designs can lead to not just modulation but the emergence of completely new surface properties at the large scale. 
Therefore, a reason behind the engineering community's narrow focus could be that the mathematical theories that connect  large-scale surface physical properties to small-scale mechanical designs are not as well developed as the ones that connect bulk physical properties to small-scale mechanical designs. 
In this paper, we focus on the surface property of adhesion and present a mathematical theory that connects the large-scale force needed to peel off a thin elastic film from a rigid substrate to the substrate's small-scale surface topography.

Adhesion is one of the most important surface physical properties. (See~\cite{deng2019effect,popov2017strength,ciavarella2019role} for discussions on this aspect.)
Adhesive peeling of a thin film from a substrate  is ubiquitous in biological and engineering systems and draws growing attention due to its many applications (e.g., Figure~\ref{fig:Example}c).
Despite its long history, the topic of thin film peeling continues to reveal new and interesting phenomena. We review some of the recent studies on this topic in \S\ref{sec:LitReview}.

In this paper we study the mechanics of peeling in a plane (2-dimensional) problem.
Specifically, we study the peeling of a thin elastic film from a rigid substrate whose surface is nominally flat with superimposed periodic undulations in a single direction.
To be more precise, the substrate's topography  is, respectively, invariant and periodic in two orthogonal directions that lie in the substrate's nominal surface plane. 
The invariant and periodic directions are shown marked as $\physe_{3}$ and $\physe_{1}$, respectively, in Figure~\ref{fig:Schematic}.
The applied tractions as well as all fields in the film also do not vary in the $\physe_3$ direction.
We assume that the film's de-adherence only takes place at the peeling front, which is a straight line parallel to $\physe_3$. %
We find that in the regime where the substrate's surface roughness is large,
the peel-off force becomes independent of the direction in which the film's free end is being pulled.
This result is in stark contrast to the results from the seminal analysis of Rivlin~\cite{rivlin1944the} and Kendall~\cite{kendall1975thin}, who studied the classical problem of plane peeling of a thin elastic film from a flat surface.
In that classical problem the peel-off force strongly depends on the direction of peeling.
We  also find that this angle-independent peel-off force's magnitude, in fact, equals the maximum value that is possible for the peel-off force during the plane peeling of a thin elastic film from a rigid substrate, irrespective of the details of the substrate's profile and the direction of peeling.
The aforementioned results are especially significant considering that
nowadays it is routinely possible to create highly regular small-scale topographies on engineering surfaces through the use of micro-fabrication and 3D printing technologies.

We make the following assumptions in our problem.
We assume that initially  the thin film is perfectly adhered to the substrate, i.e., with no gaps between it and the substrate's surface. 
Over the adhered region we do not allow for (tangential) slip between the thin film and the substrate's surface.
We model adhesive interactions between the film and the substrate using the JKR theory~\cite{johnson1971surface}, as per which during peeling the system's energy increases linearly with the area of the region of the film-substrate interface that comes de-adhered.
We consider a force-controlled peeling process.
The force is applied to the thin film's free end and its magnitude and direction can vary in a fairly general way with time, as long as the force  is tensile in nature.
We term the angle that the applied force makes with the direction that is parallel to $\physe_1$ and points away from adhered portion of the film the \textit{nominal peeling angle}, or just the peeling angle for short (Figure~\ref{fig:Schematic}).
We take the peeling process to be quasi-static in nature and ignore all inelastic and inertial effects.
We do use the notion of time, however, but only so that we can speak about the \emph{sequence} of  events that take place in our peeling experiment.
By a peeling experiment we mean a sequence of peeling angles and force magnitudes applied to the thin film, which are infinitesimally different from each other.
   
We take the thin film to be composed of a linear elastic material and to be of vanishing thickness.
Consequently, we assume that the thin film's elastic strain energy only depends on its stretching deformations. Specifically, we ignore any ``bending energy'' in the thin film.
The deformations related to stretching, however, can be of finite magnitude.
Despite these and the many other assumptions that we introduce over the course of the paper, our analysis reveals important and interesting new mechanics. 
This is possible, we believe, because we retain the important feature of contact in our problem.
That is, we allow for the detached part of the thin film to come into contact with the substrate's surface and assume that such contact is non-adhesive and frictionless.
By retaining this important feature of contact we are able to investigate the regime where the substrate's surface roughness is large.
Additionally, it is in this regime where the peel-off force becomes independent of the peeling angle and the force's magnitude becomes equal to its maximum value.

\begin{figure}[t!]
    \centering
    \includegraphics[width=0.8\textwidth]{./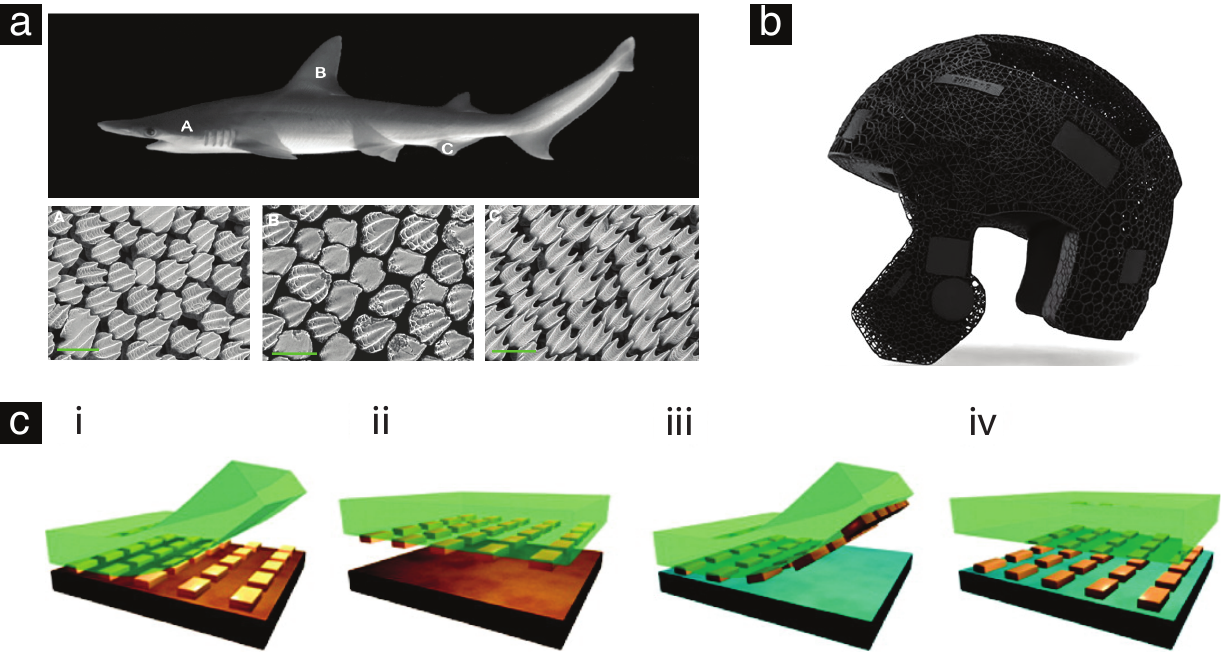}
    \caption{(a) The bonnethead shark (Sphyrna tiburo) skin surface at different body locations~\cite{wen2014biomimetic}. (b) 3D printed helmet (from~\cite{helmet}, used without permission). (c) Picking up (i--ii) and placing (iii--iv) solid objects from one substrate to another in transfer printing~\cite{feng2007competing}.}
\label{fig:Example}
\end{figure}

The outline of the paper is as follows.

We briefly review some of recent results of the mechanics of peeling in~\S\ref{sec:LitReview}.
We introduce the technical details of our problem in \S\ref{sec:model}.
Specifically, we describe the kinematics of our problem in \S\ref{sec:geometry}, and present the law that we use for modeling the thin film's de-adherence in~\S\ref{sec:energy}.
We take a \textit{configuration} in our problem to be defined by the de-adhered length, the peeling angle, and the substrate's surface profile.
The de-adhered length is defined in \S\ref{sec:geometry}, and is roughly the width of the region on the substrate's nominal surface in the $\physe_1$ direction from which the film has been detached. 
In \S\ref{subsubsec:ContactConditions} we derive the sufficiency condition, which we term the global compatibility condition, for there to be no contact between the detached part of the thin film and the substrate in a given configuration.
We present  results  for the case in which the global compatibility condition holds, i.e., in which there is no contact, in \S\ref{sec:PeelingNoContact}, and for the case in which it is violated in \S\ref{sec:PeelingWithContact}.
(When the global compatibility condition is violated the configurations that occur during a peeling experiment may or may not involve contact.)

For both cases we give explicit, analytical results for computing the equilibrium force given the \textit{true peeling angle},
which is the angle at which the detached part of the film leaves the substrate (Figure~\ref{fig:PeelingNoContactAnalysis}).
For the case in which there is no contact, we present explicit results for computing the true peeling angle from the substrate's profile, and for determining an equilibrium state's stability just from the substrate's surface curvature. 
We call a configuration along with the magnitude of the force acting on it a \textit{state}. 
For the case in which there can be contact, we present an algorithm (see Algorithm~\ref{algo:TruePeelingAngle}) that allows for numerically calculating the a configuration's true peeling angle, and determining an equilibrium state's stability.

For both cases, we also prove that there exist critical force values, such that if the magnitude of the applied force lies outside their range, the film either continuously de-adheres or adheres until the film is completely peeled off from or adhered to the substrate. 
We term these quantities the \textit{peel-initiation} and \textit{peel-off} forces.  
We determine the  peel-initiation force's minimum value as well the peel-off force's maximum value for the wide class of surface profiles that we consider in our problem and all admissible peeling angles.

We prove that in a peeling experiment in which the peeling angle is held at a fixed value that is both acute (resp. obtuse) and violates the global compatibility condition the peel-off (resp. peel-initiation) force achieves its maximum (minimum) value, irrespective of the substrate's profile.

Again considering experiments in which the peeling angle is held constant
we  show in~\S\ref{sec:Optimal} that as the surface roughness becomes large the peel-off force becomes independent of the experiment's peeling angle. This happens irrespective of the substrate's surface profile. The surface roughness becoming large is equivalent to the scenario where the lateral length scales in the substrate's surface topography becomes small. Furthermore, we also show that the magnitude of that angle-independent, peel-off force equals the maximum that is possible to be achieved. 

We conclude the  paper by discussing the effect of the thin film's elastic bending energy on the peel-off force in \S\ref{sec:discussion}.

\section{Literature review}
\label{sec:LitReview}

The mechanics of thin film peeling has been extensively studied over the decades after the seminal work of Rivlin~\cite{rivlin1944the} and Kendall~\cite{kendall1975thin}, who analyzed the peeling of a thin film from a flat surface and its importance in understanding the adhesion mechanisms in thin film-substrate systems. 
For example, Pesika \textit{et al.}~\cite{pesika2007peel} presented a peel-zone model to study the effect of the peeling angle and friction on adhesion based on the microscopic observation of the geometry of the peel zone during film detachment. Chen \textit{et al.}~\cite{chen2008pre} considered the effect of pre-stretching the film in Kendall's peeling model and found that the pre-tension significantly increases the peeling force at small peeling angles while decreasing it at large angles. Molinari and Ravichandran~\cite{molinari2008peeling} proposed a general model for the peeling of non-linearly elastic  thin films and investigated the effects of large deformations and pre-stretching. Begley \textit{et al.}~\cite{begley2013peeling} developed a finite deformation analytical model for the peeling of an elastic tape and defined an effective mixed-mode interface toughness to account for frictional sliding between the surfaces in the adhered region.
Gialamas \textit{et al.}~\cite{gialamas2014peeling} used a Dugdale-type cohesive zone to model adhesion between an incompressible neo-Hookean elastic membrane and a flat substrate and carried out  both \textit{single-} and \textit{double-sided} peeling analysis, while ignoring the membrane's bending stiffness.
Menga \textit{et al.}~\cite{menga2018multiple} investigated the periodic double-sided peeling of an elastic thin film from a deformable layer that is supported by a rigid foundation.
Kim and Aravas~\cite{kim1988elasto} performed an elasto-plastic analysis of a thin film peeling problem in which  the de-adhered part of the film is in \textit{pure-bending}.
Kinloch \textit{et al.}~\cite{kinloch1994peeling} performed an elasto-plastic analysis  of the peeling of flexible laminates and calculated the fracture energy, which included not just the energy required to break the interfacial adhesive bonds but also the energy dissipated in the plastic/viscoelastic zone at the peeling front.
Wei and Hutchinson~\cite{wei1998interface} numerically investigated the steady-state, elasto-plastic peeling of a thin film from an elastic substrate using a cohesive zone model.
Loukis \textit{et al.}~\cite{loukis1991effects} analyzed the peeling of a viscoelastic thin film from a rigid substrate and related the fracture energy and peel-off force to the peeling speed and film thickness.
Afferrante and Carbone~\cite{afferrante2016ultratough} investigated the peeling of an elastic thin film from a flat viscoelastic substrate and  gave a closed-form expression relating the peeling force to the peeling angle and the work of adhesion.

Spatial heterogeneity in the thin film's elasticity and the interface's strength can lead to significant enhancement of the effective peel-off force. The role of spatial heterogeneity in peeling was first highlighted by Kendall~\cite{kendall1975control} who carried out peeling experiments using an elastic thin  film that had alternating large and small bending stiffness regions. He varied the stiffness by either introducing reinforcements at select positions on a uniform thin film or by varying the film's thickness along its length. He peeled the film from a rigid substrate at a constant force and observed abrupt changes in the speed of the peeling front at the boundaries between the stiff and compliant regions, and an overall enhancement in the peeling force. Ghatak \textit{et al.}~\cite{ghatak2004peeling} and Chung and Chaudhury~\cite{chung2005roles} performed displacement-controlled peeling experiments between a flexible plate and an incision-patterned thin elastic layer and found that the patterns significantly enhanced the effective adhesion. They attributed the  enhancement to arrest and re-initiation of the peeling front motion  at the edges of the features in the pattern. More recently, Xia \textit{et al.}~\cite{xia2012toughening,xia2013adhesion,xia2015adhesion} experimentally and theoretically   investigated the peeling of an elastic thin film while spatially varying the bending stiffness and interface strength in it. They showed that the film's large-scale (``effective'') adhesive properties  could be significantly enhanced through the use of small-scale spatial heterogenity.

There have been several studies that investigate the effect of surface roughness on  thin film adhesion. For instance, Zhao \textit{et al.}~\cite{zhao2013improvement} simulated the peeling of a hyperelastic thin film from a rough substrate using finite elements and a cohesive zone model. They found that the peeling force could be increased by introducing a hierarchical wavy interface between the film and the substrate.
Ghatak~\cite{ghatak2014peeling} theoretically investigated the peeling of a flexible plate from an adhesive layer which was supported on a rigid substrate and had spatially varying surface topography and elastic modulus.
He found that the maximum adhesion enhancement took place when the surface height and elastic modulus varied in phase.
Peng and Chen~\cite{peng2015peeling} studied the peeling of an elastic thin film from a rigid substrate having sinusoidally varying surface topography. They computed the peeling forces for different peeling angles and surface topography parameters and found that the maximum peeling force could be significantly enhanced by increasing the substrate's surface roughness. To our knowledge there have not been any studies that consider contact between the de-adhered portion of the film and the substrate, or which study the stability of the equilibrium configurations, as we do in this current paper. The surface topography that we consider is fairly general and the regime we explore, namely where the lateral length scales in the substrate's surface topography are much smaller than the other length scales in the problem, has also not been explored within the context of thin film peeling.

    \section{Theory of wavy peeling}
    \label{sec:Theory}

    \begin{figure}[t!]
      \centering
      \includegraphics[width=0.8\textwidth]{./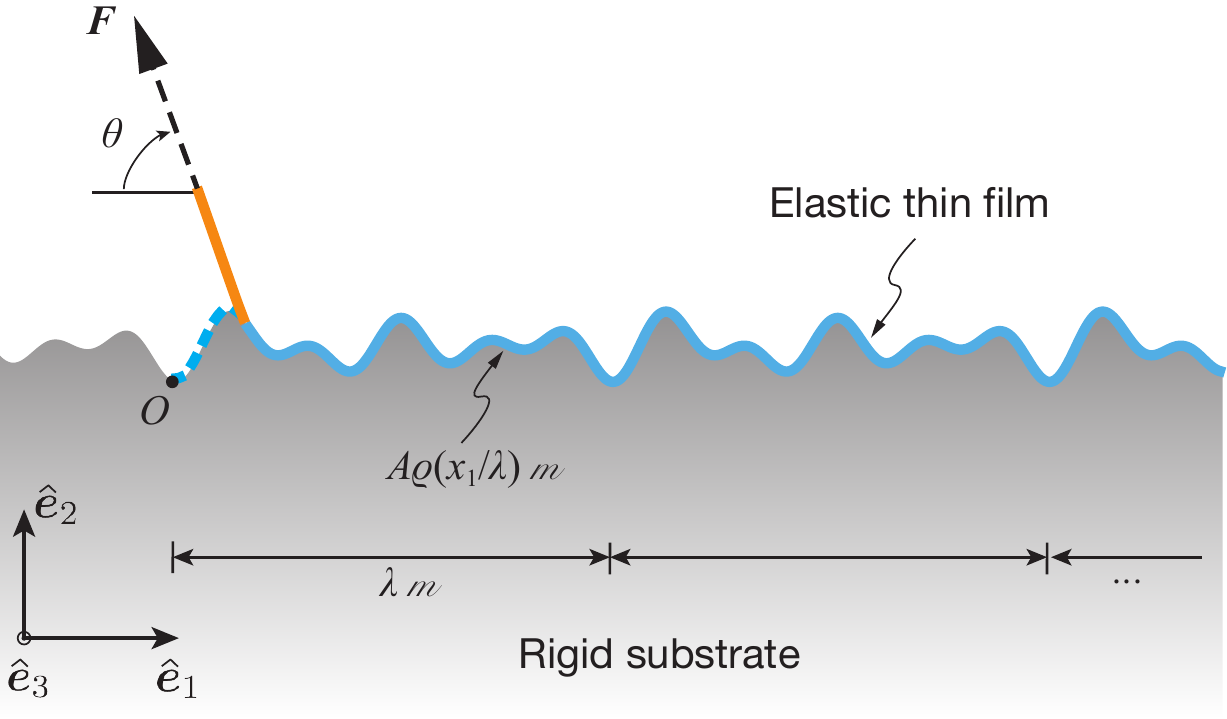}
      \caption{The schematic of peeling a thin elastic film from a rigid substrate with a periodic, wavy surface.}
      \label{fig:Schematic}
    \end{figure}

    \subsection{Model}
    \label{sec:model}
    \subsubsection{Geometry and kinematics}
    \label{sec:geometry}

    Figure~\ref{fig:Schematic} shows the geometry of our wavy peeling problem.
    Let $\b{E}$ be a three-dimensional, oriented, Hilbert space, and let the Euclidean point space $\c{E}$ be $\b{E}$'s principle homogeneous space.
    The elements of $\mathbb{E}$ have units of length.
    The physical objects in our peeling problem are contained in $\c{E}$.
    The origin of $\c{E}$, which we denote as $O$, is marked in Figure~\ref{fig:Schematic}.
    The vector set $\left(\hat{\physe}_i\right)_{i\in \mathcal{I}}$, where the index set $\mathcal{I}:=(1,2,3)$, is an orthonormal set in $\mathbb{E}$.
    That is, $\hat{\physe}_i\cdot \hat{\physe}_{j}=\delta_{ij}$, where $i$,$j\in \c{I}$, the symbol $\cdot$ denotes the inner product operator between the elements of $\b{E}$, and $\delta_{ij}$ is the Kronecker-delta symbol, which equals  unity if $i=j$ and naught otherwise.
    A typical point $X \in \c{E}$ is identified by its coordinates $\left(x_i\right)_{i\in \mathcal{I}}\in \b{R}^3$\footnote{In our work the manner in which the information about a physical quantity's units is stored  is different from how that is usually done. We model different physical quantities as vectors belonging to different vector spaces.  We store the information about a  physical quantity's units in the basis vectors we  choose for that quantity's vector space.
    For example, we may take $\hat{\mathbf{e}}_1$ to represent a motion of $1\, \mathrm{meters}$  (or $1\, \mathrm{micron}$)  in a certain fixed direction in $\mathbb{E}$. In that case  $\mathbf{e}_1$, as a consequence of its definition, would denote a motion of
    $\lambda \, \mathrm{meters}$   (resp. $\lambda \, \mathrm{microns}$) in the same direction as $\hat{\mathbf{e}}_1$.
    Thus, in our work the components of a physical  quantity with respect to the basis vectors chosen for its vector space will always be non-dimensional. For example the components of $X$ with respect to $(\mathbf{e}_i)$,  namely $\pr{x_i}_{i\in\mathcal{I}}$, are dimensionless. 
    } that are defined such that $X=O+\sum_{i\in \mathcal{I}}x_i \physe_i$.
    The vector $\sum_{i\in \mathcal{I}}x_i \physe_i\in \mathbb{E}$ is called $X$'s position vector.

    Note that the vector set $(\physe_i)_{i\in \mathcal{I}}$ is different from $(\hat{\physe}_i)_{i\in \mathcal{I}}$; it is defined as  $\physe_i :=\lambda \hat{\physe}_i$, where $\lambda > 0$, for $i=1,~2$, and $\physe_3 := \hat{\physe}_3$. The set $\pr{\physe_i}_{i\in\mathcal{I}}$ is an orthogonal, but not an orthonormal, basis for $\mathbb{E}$.
    We will be following the Einstein summation convention in this paper\footnotemark, and write expressions such as $\left(\physe_i\right)_{i\in \c{I}}$ and $\sum_{i\in \mathcal{I}}x_i \physe_i$ simply as $\left(\physe_i\right)$ and $x_i \physe_i$, respectively, and lists such as $\physe_1,~\physe_2$, and $\physe_3$ simply as $\physe_i$.

    \footnotetext{We follow the Einstein summation convention in this paper.
    If a symbol has an italicized, light-faced latin character that appears as its subscript/superscript then that subscript/superscript denotes an index and that  symbol along with that subscript/superscript denotes a component of a linear mapping.  An index that appears only once in a term is called a ``free'' index. A free index in a term denotes that the term in fact represents the tuple of terms created by varying the free index in the original term over   $\c{I}$.  An index that appears twice in a term  is called a ``repeated index''. A repeated index in a term denotes a sum of the terms that are created by varying the free index in the original term over   $\c{I}$.}

    The substrate is a rigid solid whose points' position vectors belong to the set
    \begin{linenomath}
    \begin{equation}
      \mathbb{S}=
      \left\{x_1\physe_1+x_2\physe_2+x_3\physe_3 \in \mathbb{E}~|~
      (x_1,x_2,x_3)\in \mathbb{R}^3~\text{and}~
      x_2 < \rho(x_1) :=\alpha \varrho(x_1)\right\},
      \label{eq:SubstrateGeometry}
    \end{equation}
    \end{linenomath}
    where $\alpha:=A/\lambda$, $A\ge 0$, $\varrho: \mathbb{R}\to [-1,1]$ is a surjective, 1-periodic function, and $\mathbb{R}$ is the set of real numbers.
    By 1-periodic we mean that $\varrho(x_1+1)=\varrho(x_1)$ for all $x_1\in \mathbb{R}$.
    Without loss of generality we take $\int_0^1 \varrho(x_1) \, dx_1 = 0$. 
    We also assume that $\varrho$ and its first and second derivatives, which we denote as $\dot \varrho$ and $\ddot \varrho$, respectively, are non-constant, continuous functions, i.e., $\varrho \in C^2$. 
    We call $\alpha$, $A$, $\lambda$, and $\varrho$ the substrate surface's aspect ratio, amplitude, periodicity, and profile, respectively.

    We take the thin film to be of width $\physB$, of thickness $\physh$, and to be perfectly adhered  to the substrate's surface in its initial, or reference, configuration. Here $\physB$ and $\physh$ are physical parameters and have units of length. Say the units of $\physe_i$ and $\hat{\physe}_i$ is some length $\mathscr{m}$, which, for example, may stand for meters, then we would say that the magnitude of $\physB$  (resp. $\physh$) is $b$  (resp. $h$) \textit{iff} $\physB=b\,\physm$ (resp. $\physh=h\,\physm$).
    Thus, initially the configuration of the thin film can be described as
    \begin{linenomath}
    \begin{equation}
      \mathbb{B}_{0}=\left\{ x_{1}\physe_1+\rho\left(x_{1}\right)\physe_2+x_{3}\physe_3\in \mathbb{E}\;|\;x_{1}\in\mathcal{D},\;\vert x_{3}\rvert\le b/2\right\},
      \label{eq:ThinFilmGeometry}
    \end{equation}
    \end{linenomath}
    where $\mathcal{D}:=[0,+\infty)$\footnotemark.

    \footnotetext{As is standard in continuum mechanics, we will be referring to a particular thin film material particle using the coordinates of the the spatial point that it occupied in the initial configuration. 
     That is, when we say a material particle $(x_1,x_2,x_3)$, we in fact are referring to the material particle that in the initial configuration occupied the spatial point with coordinates $(x_1,x_2,x_3)$. 
     For the sake of brevity, we will be referring to a material particle  $(x_1,x_2,x_3)$ simply as $(x_1,x_3)$, since the second co-ordinate of any spatial point that was occupied by a thin film material particle in the initial configuration is fully determined by the point's first co-ordinate  (cf.~\eqref{eq:ThinFilmGeometry}). 
     Finally, when we say ``the (thin-film) material particle $x_1$'', where $x_1\in\mathcal{D}$, we in fact mean the group of material particles $(x_1,x_3)$, where $\lvert x_3\rvert\le b/2$. 
     }

    We model the peeling process by assuming that any deformed configuration $\boldsymbol{\kappa}$ of the thin film can be described as\footnotemark
    \begin{linenomath}
		\begin{subequations}
			\begin{align}
				\mathbb{B}
				=\left\{
				\mathbscr{x}(x_1)+x_{3}\physe_3 \in \mathbb{E}\;|\;x_{1}\in\mathcal{D},\;\vert x_{3}\rvert\le b/2\right\} ,
				\label{eq:DefConf}
				\intertext{where}
				\mathbscr{x}(x_1)
				:=
				\left(x_{1}+u_{1}(x_{1})\right)\physe_1+\left(\rho(x_{1})+u_{2}(x_{1})\right)\physe_2, \label{eq:DeformationMappingFromManifoldToCurrent}
			\end{align}
        \end{subequations}
    \end{linenomath}
    $u_{1},u_{2}:\mathcal{D}\to\mathbb{R}$ and their derivatives, denoted as $\dot{u}_{1}$ and $\dot{u}_{2}$, are continuous. We further assume that  $u_{1}\left(x_{1}\right)=u_{2}\left(x_{1}\right)=0$ for all $x_{1}\ge a$, for some $a\in \mathcal{D}$, and $\dot{u}_{1}\left(x_{1}\right)=\dot{u}_{2}\left(x_{1}\right)=0$ for all $x_{1}>a$.
    We refer to  $a$ as \textit{the de-adhered length}. We call $P:=\left\{ O+a\physe_1+\rho(a)\physe_2+x_3\physe_3 \in \mathcal{E}\;|\; \lvert x_{3}\rvert\le b/2\right\}$ the peeling front and $\Gamma_{a}=\{x_1 \in \mathcal{D}~|~x_1>a\}$ the adhered region.

    \footnotetext{In standard continuum mechanics the reference and deformed configurations are taken to belong to different spaces. In contrast, here we take both the reference and deformed configurations,  $\mathbb{B}_0$ and $\mathbb{B}$, to belong to the space space, namely $\mathbb{E}$.}

    On knowing the de-adhered length $a$ the length of the film peeled from the substrate  can be computed as  $\lambda l(a\,;\rho)\,\physm$, where $l(\cdot\,;\rho):\mathcal{D}\to \mathcal{D}$  is defined by the equation
    \begin{linenomath}
    \begin{equation}
      l(a\,;\rho)=\int_{0}^{a}dx_1 \left(1+\Dro(x_1)^2\right)^{1/2}.
      \label{eq:l}
    \end{equation}
    \end{linenomath}
    We refer to $l(a\,;\rho)$ as the peeled length, and will often abbreviate it as $l$.

    \subsubsection{Evolution law for de-adherence}
    \label{sec:energy}
    In postulating the evolution law for de-adherence,  we   ignore all dynamic effects, such as inertial forces, kinetic energy, and viscoelastic behavior  in the thin film. As we mentioned in \S\ref{sec:Introduction}, we do use the notion of time, but only so that we can speak about the \emph{sequence} of  events that take place in our peeling experiment.

    In our peeling experiment the thin film de-adheres due to the application of the force $\physF$ to its free end, which initially is at $O$, the origin of $\mathcal{E}$.
    We think of  force as a linear map from $\mathbb{E}$ into a one dimensional vector space whose elements have units of energy.
    The set of all forces can, of course, be made into a vector space, which we will denote as $\mathbb{F}$.
    We use the set $\pr{\physf_i}$ as a basis for $\mathbb{F}$, where $\physf_i$ are defined such that $\physf_i\pr{\physe_j}= \mathbio{E}_{c}\, \delta_{ij}$.
    The symbol $\mathbio{E}_{c}$, which appears in the last expression, is defined to be equal to the energy $\lambda \physE \physB \physh  \physm$, where $\physE$ is the thin film's Young's modulus.
    We express $\physF$ as $ F \hat{\physf}$, where $F\ge0$ and $\hat{\physf}:=-\cos(\theta) \physf_1+\sin(\theta)\physf_2$. The angle $\theta$ is shown marked in Figure~\ref{fig:PeelingNoContactAnalysis}. It is prescribed as part of the problem's definition. We call it the \textit{nominal peeling angle}, or simply \textit{peeling angle} for short.
    We denote the position vector of the free end of the thin film, on which $\physF$ acts, as $\physu\in \mathbb{E}$,  and will often refer to this quantity   as simply the force-position-vector. Our experiment is force controlled and quasi-static.
    By force-controlled we mean that while the system's configuration is changing,  $\physF$ is held fixed.
    If our experiment were a general force controlled experiment, then we would be allowed to vary  $\physF$ once the adhered region had stopped evolving.
    However, in our experiment  we are only allowed to vary $F$ after the adhered region has stopped evolving.
    By quasi-static we mean that each variation in $F$  is of infinitesimal magnitude and is applied  all at once at a particular instance in time.

    Consider an experimentally observed configuration of the thin film $\boldsymbol{\kappa}_0^{\stable}$, in which the de-adhered length is $a-\Delta a$, the peeled length is $l-\Delta l$, the force position vector is $\physu-\Delta \physu$, and the force is $(F-\Delta F)\hat{\physf}$.
    Imagine that a force variation of $\Delta F \hat{\physF}$    is then abruptly added to the force that was previously acting on the thin film so that the new force is $F\hat{\physF}$.
    As a result, the thin film's configuration will evolve to a new configuration $\boldsymbol{\kappa}^{\stable}$,  in which the  de-adhered length is $a$, the peeled length is $l$, the force-position-vector is $\physu$, and, of course, the force is $F\hat{\physF}$.
    We postulate that the new configuration $\boldsymbol{\kappa}^{\stable}$ is the one that locally minimizes the system's total potential energy and is closest to $\boldsymbol{\kappa}_0^{\stable}$.

    To make the postulate  precise, consider a configuration $\widetilde{\boldsymbol{\kappa}}$ that is close to $\boldsymbol{\kappa}^{\stable}$ and has the same force as $\boldsymbol{\kappa}^{\stable}$, i.e., $F\hat{\physF}$.
    The de-adhered and peeled lengths, and the force-position-vector in $\widetilde{\boldsymbol{\kappa}}$ are, however, different from those in $\boldsymbol{\kappa}^{\stable}$; we denote them, respectively, as $a+\delta a$,  $l+\delta l$, and $\physu+\delta \physu$.
    Since $F\ge 0$, i.e. the peeled section of the film is in tension, the variations
    $\delta l$ and $\delta \physu$, in fact, depend on $\delta a$.
    Thus, these variations are to be interpreted as abbreviations for $\delta l(\delta a\,;a,F,\rho)$ and $\delta \physu(\delta a\, ;a,F,\rho)$, respectively.
    Let the difference in the system's potential energy between the configurations $\widetilde{\boldsymbol{\kappa}}$ and $\boldsymbol{\kappa}^{\stable}$ be $\mathbio{E}_c\, \delta E$, where $\mathbio{E}_{c}$ is a constant and has unit of energy, and we refer to $\delta E\in \mathbb{R}$ as the non-dimensional potential energy variation.
    In our model of the wavy peeling experiment, we take $\delta E$  to be a sum of  three different energy variations.
    These variations take place, respectively, in the energy stored in the interbody adhesive interactions between the thin film and the substrate, the elastic strain energy stored in the peeled part of the thin film, and, finally, the energy stored in the apparatus that maintains the constant force $F\hat{\physf}$ between the configuration $\boldsymbol{\kappa}^{\stable}$ and $\widetilde{\boldsymbol{\kappa}}$.

    We model adhesive interactions between the thin film and the substrate using the JKR theory~\cite{johnson1971surface}. According to this theory,  the formation of  an interface region of area $\delta \mathscr{A}$ lowers  the system's total potential energy, irrespective of the shape of the interface region or any other details of the experiment, reversibly by  $\physw \delta \mathscr{A}$. Here,  $\physw$ is the Dupr\'e work of adhesion~\cite[p.~30]{maugis:book}. Thus,  the variation in the potential energy stored in the interbody adhesive interactions is
    \begin{linenomath}
    \begin{equation}
      w \delta l,
      \label{eq:Es}
    \end{equation}
    \end{linenomath}
    where $w$ is defined such that $ \physE \physh w =\physw$. That is, $w$ is the non-dimensional work of adhesion.

    We assume that the peeled part of the  film is in uniform, uniaxial tension. It follows from this assumption that the strain in the peeled part of the film is also uniform, and uniaxial, and that its magnitude is equal to $F$. It also follows that the variation in the elastic strain energy stored in the thin film is
    \begin{linenomath}
    \begin{equation}
      \frac{1}{2}F^2 \delta l.
      \label{eq:Eel}
    \end{equation}
    \end{linenomath}
    Usually, the above term is supplemented by an additional term that corresponds to bending energy in the thin film~\cite{peng2015peeling}.
    We, however, ignore the film's bending energy in our model. As we state in our closing remarks, we believe that ignoring the bending energy is unlikely to significantly affect our estimates for the film's peel-off force in the type of the peeling experiments that we consider in this paper.

    The variation in the potential energy of the apparatus maintaining the constant force $F\hat{\physF}$ is $-\physF(\delta \physu)$.
    Say $\delta \physu=\delta u_j\physe_j$.
    This variation can therefore be written as
    \begin{linenomath}
    \begin{equation}
      -\boldsymbol{F}\cdot \delta \boldsymbol{u}\footnotemark,
      \label{eq:Ep}
    \end{equation}
    \end{linenomath}
    where $\boldsymbol{F}:= F \hat{\boldsymbol{F}}$,
    $\hat{\bm{F}} := (-\cos(\theta),\sin(\theta),0)$, and $\delta \boldsymbol{u}:= \pr{\delta u_i}$.
    \footnotetext{The symbol $\cdot$ in this expression denotes the inner product operator between elements of $\mathbb{R}^3$.}
    It follows from the previous discussion that,
    \begin{linenomath}
    \begin{equation}
      \delta E= w\delta l+\frac{1}{2}F^2\delta l-\boldsymbol{F}\cdot \delta \boldsymbol{u}.
      \label{eq:dE}
    \end{equation}
    \end{linenomath}
    As is the case with $\delta l$ and $\delta \rtple{u}$, the symbol $\delta E$  in~\eqref{eq:dE} is, in fact, an abbreviation for the value $\delta E(\delta a;\,a, F,\rho)$.

    \renewcommand*{\thefootnote}{\arabic{footnote}}

    Since $\delta l$ and $\delta \rtple{u}$ depend on $\delta a$, the formula for $\delta E$ given by~\eqref{eq:dE} needs to be  further refined before it can be used for determining whether or not a particular de-adhered length $a$ is in equilibrium; and, if $a$ is in equilibrium, then for determining the stability of that equilibrium. The particulars of the  requisite refinement depend on whether or not the peeled part of the film contacts the substrate. Therefore, we will be making separate refinements of~\eqref{eq:dE} for the cases of no-contact and contact in \S\ref{sec:PeelingNoContact} and \S\ref{sec:PeelingWithContact}, respectively. In both cases, however, the refinement process will involve  determining  the asymptotic dependence of $\delta l$ and $\delta \rtple{u}$ on $\delta a$ as $\delta a\to 0$, and then  using that information to  determine the asymptotic dependence of $\delta E$ on $\delta a$.

    \subsubsection{Conditions for the peeled part of thin film to not contact the substrate}
    \label{subsubsec:ContactConditions}
    Let $\boldsymbol{T}\left(x_{1}\right)$ be  $(\physe_{1}\cdot \dot{\boldsymbol{\mathbscr{x}}}(x_{1}),\physe_{2}\cdot \dot{\boldsymbol{\mathbscr{x}}}(x_{1}),0)\in \mathbb{R}^3$, where $\mathbscr{x}(x_1)$ is defined in~\eqref{eq:DeformationMappingFromManifoldToCurrent}. Because a material particle $x_1$ is de-adhered from or adhered to the substrate (depending on whether $x_1$ is less than or greater than $a$), the vector $\boldsymbol{T}(x_{1})$ will often be discontinuous at $x_{1}=a$.
    However, since we have assumed $\dot{\varrho}$, $\dot{u}_1$, and $\dot{u}_2$ to be continuous, the following right and left hand limits of $\boldsymbol{T}\left(x_{1}\right)$ as $x_{1}\to a$ are well defined:
    \begin{linenomath}
    \begin{equation*}
      \boldsymbol{T}\left(a^{\pm}\right):= \lim_{x_{1}\to a^{\pm}}\boldsymbol{T}(x_{1}).
    \end{equation*}
    \end{linenomath}
    The vector $\boldsymbol{T}\left(a^{+}\right)$ is essentially the
    tangent to the substrate's surface profile at $x_{1}=a$, and the
    vector $-\boldsymbol{T}\left(a^{-}\right)$ points in the direction
    the detached part of the thin film leaves the substrate's surface
    at the peeling front. (See Figure~\ref{fig:PeelingNoContactAnalysis}.)
    We call the angle between $\boldsymbol{T}\left(a^{-}\right)$
    and $\boldsymbol{T}\left(a^{+}\right)$ the true peeling angle $\psi(a)$.
    For the thin film to not go through the substrate immediately after de-adhering from it, it is necessary that
    \begin{linenomath}
    \begin{equation}
      \psi(a)\in[0,\pi].
      \label{eq:localComp}
    \end{equation}
    \end{linenomath}
    We refer to~\eqref{eq:localComp} as the local compatibility condition.

    It can be shown that once the local compatibility condition is satisfied, the peeled part of the thin film will not contact the substrate anywhere, irrespective of the location of the peeling front \textit{iff}
    \begin{linenomath}
      \begin{subequations}
        \begin{align}
          \theta
          &\in \left[-\tan^{-1}\pr{\Dro^{-}}, \pi - \tan^{-1}\pr{\Dro^{+}}\right],\label{eq:ThetaNotRange}
          \intertext{where}
          \Dro^{\pm}&:=\Dro\pr{a^{\pm}},\label{eq:DRhoPlusMinus}\\
          a^{\pm} &:= \arg\max\left\{\pm\, \Dro\pr{x_1}~|~ x_1 \in [0,1]\right\}.\label{eq:aPlusMinus}
        \end{align}
        \label{eq:GlobalComp}
      \end{subequations}
    \end{linenomath}
    We refer to  \eqref{eq:GlobalComp} as the global compatibility condition.

    \subsection{Peeling with no contact}
    \label{sec:PeelingNoContact}
    \begin{figure}[t!]
      \centering
      \includegraphics[width=0.7\textwidth]{./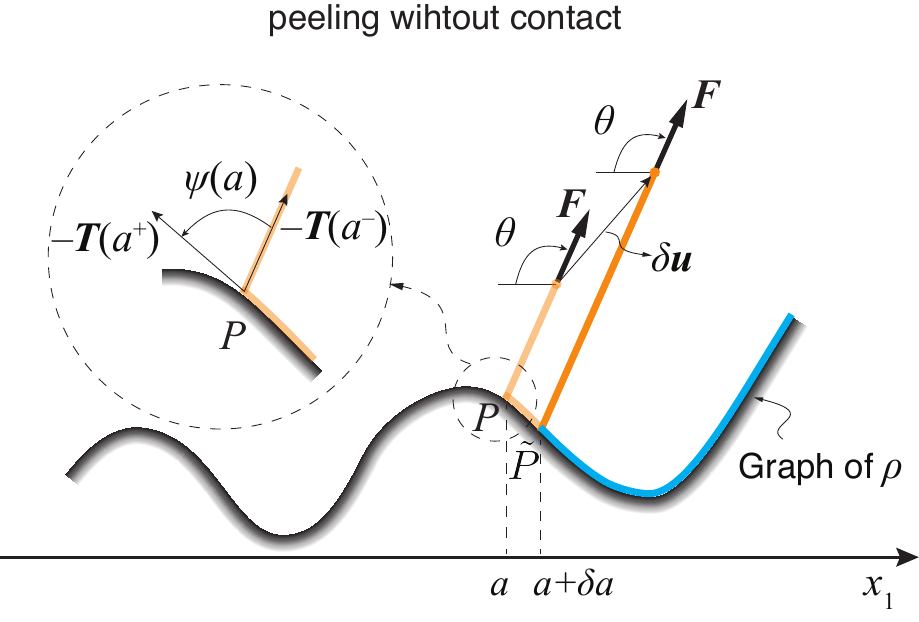}
      \caption{The schematic of peeling without contact during an infinitesimal advance of the peeling front from $P$ to $\tilde{P}$, where $\delta \bm{u}$ denotes the variation of position vector of the free end of the thin film, $\psi(a)$ is the true peeling angle.}
      \label{fig:PeelingNoContactAnalysis}
    \end{figure}

    In this section we study the case in which the local and global compatibility conditions (i.e.,~\eqref{eq:localComp} and~\eqref{eq:GlobalComp}) are satisfied, and therefore  the peeled part of the thin film does not contact the substrate anywhere.

    \subsubsection{Energy variation}
    \label{sec:EnergyVariation_NC}

    Since the peeled part of the thin film is not in contact with the substrate, it follows   that
    \begin{linenomath}
    \begin{equation}
      \delta {\bm{u}}
      = (\delta {a}, \delta \rho) + \delta {l}(1+\varepsilon)\hat{\bm{F}},
      \label{eq:du_NC}
    \end{equation}
    \end{linenomath}
    where $\varepsilon$ is the uniform, uni-axial strain in the peeled part of the thin film, and $\delta\rho:= {\rho}({a}+\delta {a}) - {\rho}({a})$.
    It follows from~\eqref{eq:l} that
    \begin{linenomath}
    \begin{equation}
        \delta l(\delta a;a, \rho)=\dot{l}(a;\rho)\delta a+\frac{1}{2}\ddot{l}(a;\rho)\pr{\delta a}^2+o\pr{\pr{\delta a}^2},\label{eq:deltalAsymExpan}
    \end{equation}
    \end{linenomath}
    where
    \begin{linenomath}
    \begin{subequations}
    \begin{align}
      \dot{l}(a;\rho)&=\pr{1+\Dro(a)^2}^{\frac{1}{2}},\label{eq:Dl}\\
      \ddot{l}(a;\rho)&=\frac{\dot{\rho}(a) \ddot{\rho}(a)}{\pr{1+\dot{\rho}(a)^2}^{\frac{1}{2}}}. \label{eq:DDl}
    \end{align}
    \label{eq:PLDerivatives}
    \end{subequations}
    \end{linenomath}
    Substituting the asymptotic expansions for $\delta \rho$ and $\delta l$ as $\delta a \to 0$ into \eqref{eq:du_NC}, and then substituting the resulting asymptotic expansion for $\delta \boldsymbol{u}$ into~\eqref{eq:Ep}, we find that
    \begin{linenomath}
    \begin{equation}
      \begin{aligned}
        -{\bm{F}} \cdot \delta {\bm{u}}
        &= {F}\left(\cos(\theta) - \sin(\theta)  \Dro({a}) - (1+\varepsilon)\dot{l}({a}) \right)\delta {a} \\
        & \qquad - \frac{1}{2} {F}\left( \sin(\theta) \DDro({a}) + (1+\varepsilon)\ddot{l}({a}) \right) (\delta {a})^2 + o\pr{(\delta {a})^2}.
      \end{aligned}
      \label{eq:Fdu_NC}
    \end{equation}
    \end{linenomath}
    The symbol $o$, that appears in~\eqref{eq:Fdu_NC} and elsewhere, is the Bachmann-Landau ``Small-Oh'' symbol. Its primary property of relevance is that  $o\pr{\pr{\delta a}^n}/(\delta a)^n\to 0$, where $n=0,~1,\ldots$, as $\delta a\to 0$.

    \begin{figure}[htbp!]
      \centering
      \includegraphics[width=\textwidth]{./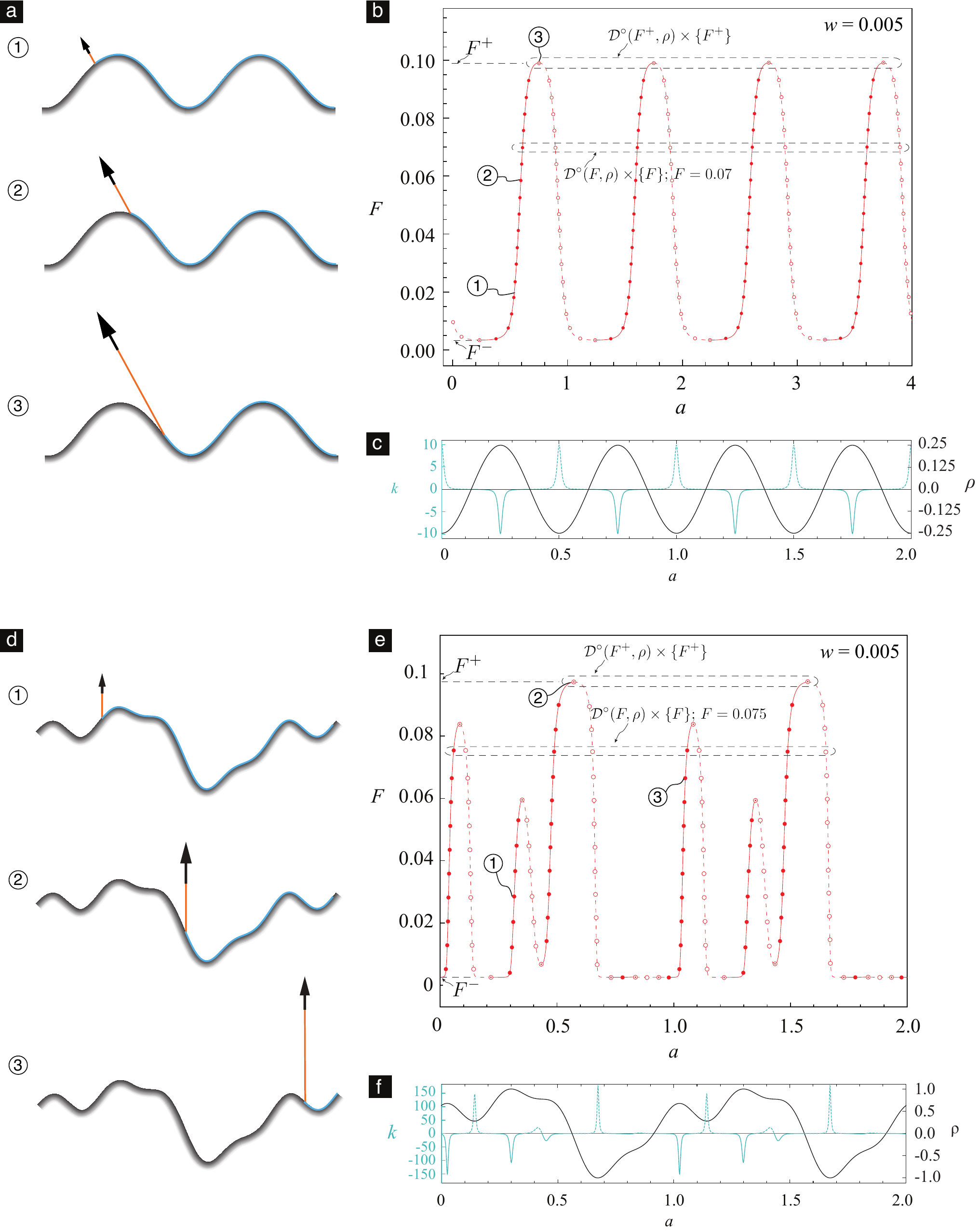}
      \caption{(a)--(c): Thin film peeling on a sinusoidal surface that does not involve contact. In this example, $A = 0.25$, $\lambda = 1.0$, $\theta = \pi/3$, and $\varrho({x}_1) = -\cos\left( 2\pi {x}_1 \right)$.
      (a) The peeling configurations corresponding to the equilibrium state as marked in (b).
      (b) The plot of the set of points $\mathcal{D}^{\circ}(F,\rho) \times \{F\}$. The peel-off force $F^+ = 0.099$, which is much greater than the peel-off force, 0.01, for peeling on a flat surface with the same nominal peeling angle.
      (c) The plot of the signed curvature and graph of $\rho$ of the sinusoidal surface.
      (d)--(f): Thin film peeling on a complicated surface that does not involve contact. In this example, $A = 1.0$, $\lambda = 1.0$, $\theta = \pi/2$, $\varrho(x_1) = \pr{\sin(2\pi x_1) + \cos(4\pi x_1 + \pi/4)/2 + \sin(8\pi x_1 + \pi/3) + 0.256}/1.315$.
      (d) The peeling configurations corresponding to the equilibrium state as marked in (e).
      (e) The plot of the set of points $\mathcal{D}^{\circ}(F,\rho) \times \{F\}$. The peel-off force $F^+ = 0.97$, which is much greater than the peel-off force, 0.005, for peeling on a flat surface with the same nominal peeling angle.
      (f) The plot of the signed curvature and graph of $\rho$ of the complicated surface.
      In (b) and (e), the stable equilibrium state is marked as $\bullet$, neutral as $\odot$, and unstable as $\circ$.
      }
      \label{fig:FNonContact}
    \end{figure}

    In the current case of no contact it also follows that the true peeling angle is given by
    \begin{linenomath}
    \begin{equation}
      \psi(a) = \theta + \tan^{-1}\pr{\Dro\pr{a}}.
      \label{eq:psi_NC}
    \end{equation}
    \end{linenomath}
    By recognizing from~\eqref{eq:psi_NC}  that
    \begin{linenomath}
    \begin{equation}
      \cos(\psi(a)) \dot{l}({a};\rho)= \cos(\theta) - \sin(\theta) \Dro(a),
      \label{eq:cos_psi}
    \end{equation}
    \end{linenomath}
    we can rewrite~\eqref{eq:Fdu_NC} as
    \begin{linenomath}
    \begin{equation}
      \begin{aligned}
        -{\bm{F}} \cdot \delta {\bm{u}}
        &= {F} \left(\cos(\psi(a)) - (1+\varepsilon) \right)\dot{l}({a};\rho) \delta {a} \\
        & \qquad - \frac{1}{2} {F}\left( \sin(\theta) \DDro({a}) + (1+\varepsilon)\ddot{l}({a};\rho) \right) (\delta {a})^2 + o((\delta {a})^2).
      \end{aligned}
      \label{eq:Fdu_NC1}
    \end{equation}
    \end{linenomath}

    By substituting~\eqref{eq:Fdu_NC1} into~\eqref{eq:dE}, and noting that on account of our non-dimensionalization scheme $\varepsilon=F$, we get that
    \begin{linenomath}
      \begin{subequations}
        \begin{align}
          \delta E (\delta {a};a,{F},\rho) &=
            \delta E_{1}(a\, ;\rho, F)\delta a+
            \delta E_{2}(a\, ;\rho, F)\pr{\delta a}^2+o\pr{\pr{\delta a}^2}, \label{eq:dPi_NC}
          \intertext{where}
          \delta E_{1}(a\, ;\rho, F) &:= \left( -\frac{1}{2}{F}^2 + {F}(\cos(\psi(a))-1) + {w} \right)\dot{l}({a};\rho), \label{eq:dPida_NC} \\
          \delta E_{2}(a\, ;\rho, F) &:= - \frac{1}{2} \left( {F} \sin(\theta) \DDro({a}) + \left(\frac{{F}^2}{2}+{F}-{w}\right)\ddot{l}({a};\rho) \right). \label{eq:d2Pida2_NC}
        \end{align}
      \end{subequations}
    \end{linenomath}

  \subsubsection{Equilibrium state}
  \label{sec:EqmPeelingNoContact}

  Since we take $\rho \in C^2$, it follows that $\Dro$ and  $\dot{l}$ are continuous, and, as a consequence, that $\delta E_1(\cdot;\rho,F)$ is continuous. It then follows that for $a$ to be an equilibrium de-adhered length it is necessary that  $\delta E_{1}(a\, ;\rho, F)=0$. For a given $F$ and $\rho$ there can, however, be more than one de-adhered length that is in equilibrium. We characterize all those lengths by saying that they belong to the set
  \begin{linenomath}
  \begin{equation}
    \mathcal{D}^{\circ}(F,\rho):=\left\{a\in \mathcal{D}~|~\delta E_{1}(a\,;\rho,F)=0\right\}.
    \label{eq:EqmDomain}
  \end{equation}
  \end{linenomath}
  We plot the set of points
  \begin{linenomath}
  \[\mathcal{D}^{\circ}(F,\rho) \cp  \{F\}\footnotemark\]
  \end{linenomath}
  for various $F$ values,  for the example surfaces shown in Figures~\ref{fig:FNonContact}a and c in Figures~\ref{fig:FNonContact}b and d, respectively. As can be seen, the peeling force for peeling on wavy surface is not constant, but varies with the same periodicity of the wavy surfaces.
  It can be shown that the point set $\mathcal{D}^{\circ}(F,\rho) \cp  \{F\}$, for any admissible
  $F$, falls on the graph of the periodic function
  \begin{linenomath}
    \begin{subequations}
      \begin{align}
        F(\cdot)
        &:=\mathpzc{F}\circ \psi, \label{eq:rmF_func}
        \intertext{where $\psi$ is defined in~\eqref{eq:psi_NC} and $\mathpzc{F}: [0,\,\pi]\to\mathcal{D}$ is defined by the equation}
        \mathpzc{F}(\psi) = \cos\pr{\psi} - 1 &+ \left(\left( \cos(\psi) - 1 \right)^2 + 2{w} \right)^{1/2}.
        \label{eq:F_func}
      \end{align}
      \label{eq:FasFuncofa}
    \end{subequations}
  \end{linenomath}
  We only consider cases in which the work of adhesion $w$ is non-negative.
  It therefore follows from~\eqref{eq:F_func} that $\mathpzc{F}(\psi)$ is always non-negative.
  The graph of $\mathcal{F}$ for different  $w$ values is shown in Figure~\ref{fig:F-theta}.
  As can be seen, $\mathcal{F}$ is a strictly decreasing function whose value at any admissible $\psi$ increases with $w$.

  Kendall analyzed the peeling of a thin film on a flat smooth surface~\cite{kendall1975thin}.
  By setting $\rho(x_1) = 0$ for a flat surface, which leads to $\psi(a) = \theta$ from~\eqref{eq:psi_NC}, Kendall's result can be immediately recovered from~\eqref{eq:FasFuncofa} which gives the peeling force as
  \begin{linenomath}
  \begin{equation}
    F(a) = \cos(\theta) - 1 + (\left(\cos(\theta) - 1\right)^2 + 2{w})^{1/2},
    \label{eq:force_flat}
  \end{equation}
  \end{linenomath}
  which is a constant for a given nominal peeling angle $\theta$.

  \footnotetext{The symbol $\cp$ denotes the Cartesian product between sets. We use the symbol $F$ to denote both the magnitude of the force acting on the thin film as well as the function defined in~\eqref{eq:rmF_func}.}

  \begin{figure}[t!]
    \centering
    \includegraphics[width=0.6\textwidth]{./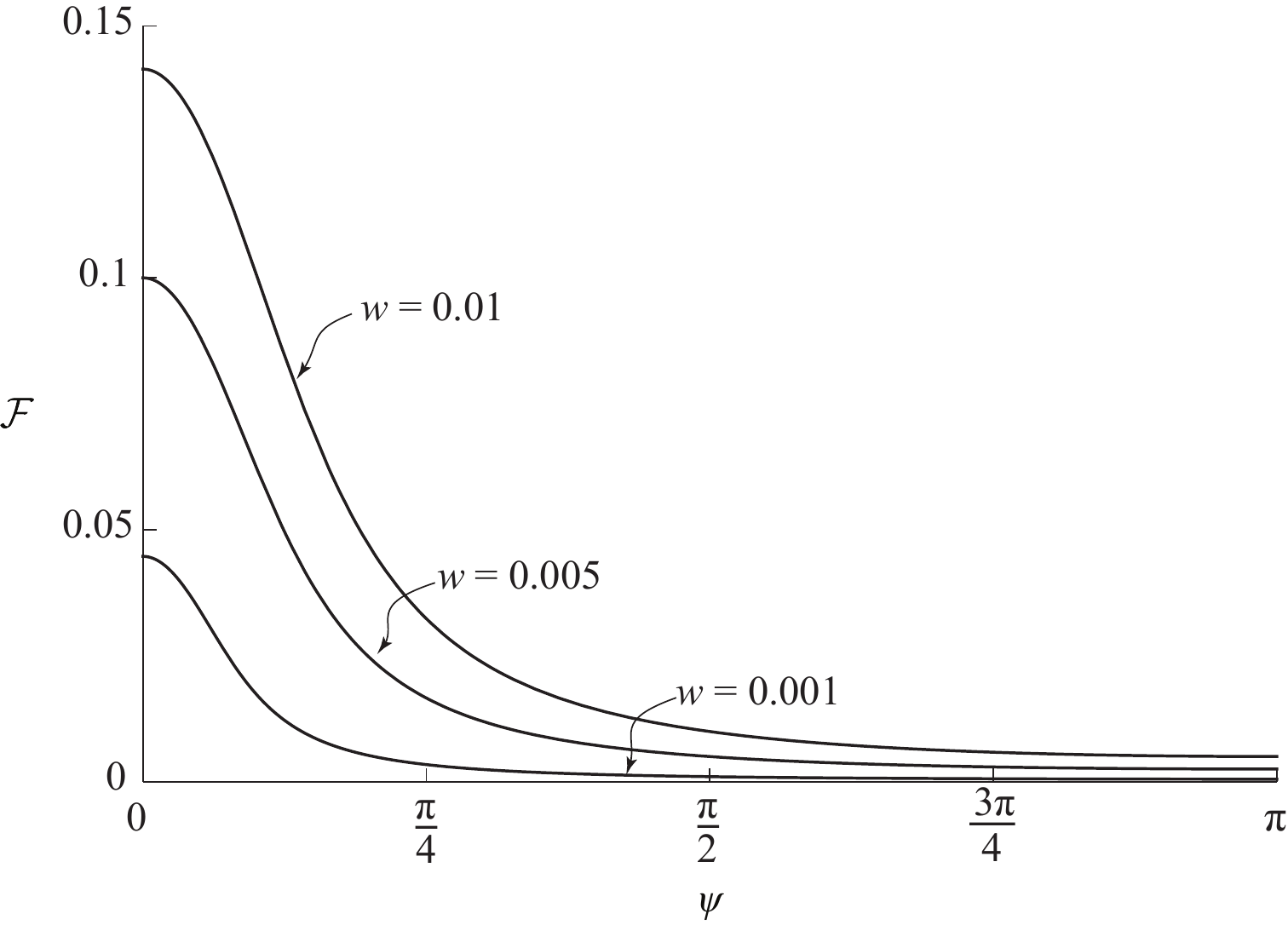}
    \caption{The plot of the graph of $\mathcal{F}$ as $\psi$ increases from 0 to $\pi$ for different ${w}$.}
    \label{fig:F-theta}
  \end{figure}

 For a general peeling process we define the supremum and infimum equilibrium force values, denoted as $F^{+}$ and $F^{-}$, respectively, as
 \begin{linenomath}
  \begin{equation}
      F^{\pm} = \sup/\inf\, \{\mathpzc{F}\pr{\psi(a)}~|~a \in \mathcal{D}\}.
      \label{eq:Fmp}
  \end{equation}
  \end{linenomath}
  It can be shown that the maximum and minimum values of the function $\mathpzc{F}$ are $(2w)^{1/2}$ and $(4+2w)^{1/2}-2$, respectively. This implies that $F^{+}$ is bounded above by $(2w)^{1/2}$ and $F^{-}$ is bounded below by $(4+2w)^{1/2}-2$.

  We denote the maximum and minimum values of the true peeling angle during peeling as $\psi^+$ and $\psi^-$, respectively.
  In the current case of peeling with no contact it follows from the fact that $\mathcal{F}$ is a monotonically decreasing function that
  \begin{linenomath}
  \begin{subequations}
    \begin{align}
      F^{\pm}&=\mathpzc{F}\pr{\psi^{\mp}},\label{eq:Fmp_NC}
      \intertext{where}
      \psi^{\pm} := \theta &+ \tan^{-1}\left(\Dro^{\pm}\right).\label{eq:psi_NC_pm}
    \end{align}
  \end{subequations}
  \end{linenomath} 
  Since $ \mathpzc{F}(\psi)$ is always non-negative it follows from \eqref{eq:Fmp_NC} that $F^{\pm}\ge 0$.

  \begin{theorem}
    If $F \notin [F^{-}, \;F^{+}]$ then there does not exist any equilibrium de-adhered length at that $F$ (e.g., see Figures~\ref{fig:FNonContact}b and e).
    \label{proof:FmpNoEqm}
  \end{theorem}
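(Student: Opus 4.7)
The plan is to reduce the statement to the elementary observation that the equilibrium set $\mathcal{D}^{\circ}(F,\rho)$ is exactly the preimage of the singleton $\{F\}$ under the map $a\mapsto \mathpzc{F}(\psi(a))$, and then invoke the definition of $F^{\pm}$ as the supremum and infimum of the range of this map.

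First I would recall from~\eqref{eq:EqmDomain} and~\eqref{eq:dPida_NC} that $a\in\mathcal{D}^{\circ}(F,\rho)$ iff
\begin{equation*}
\left(-\tfrac{1}{2}F^{2}+F(\cos(\psi(a))-1)+w\right)\dot{l}(a;\rho)=0.
\end{equation*}
Since $\dot{l}(a;\rho)=(1+\dot\rho(a)^2)^{1/2}>0$ by~\eqref{eq:Dl}, the equilibrium condition is equivalent to the quadratic equation $F^{2}-2F(\cos(\psi(a))-1)-2w=0$ in $F$. I would then solve this quadratic for $F$, retaining the unique non-negative root since $F\ge 0$ is part of the problem's setup; as shown in the derivation of~\eqref{eq:FasFuncofa}, this root is precisely $\mathpzc{F}(\psi(a))$. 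Thus $a\in\mathcal{D}^{\circ}(F,\rho)$ iff $F=\mathpzc{F}(\psi(a))$.

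Next I would invoke the defining equation~\eqref{eq:Fmp} for $F^{\pm}$, namely $F^{+}=\sup\{\mathpzc{F}(\psi(a))\mid a\in\mathcal{D}\}$ and $F^{-}=\inf\{\mathpzc{F}(\psi(a))\mid a\in\mathcal{D}\}$. If $F\notin[F^{-},F^{+}]$, then by definition of supremum and infimum there is no $a\in\mathcal{D}$ for which $\mathpzc{F}(\psi(a))=F$. Combined with the equivalence established in the previous step, this yields $\mathcal{D}^{\circ}(F,\rho)=\varnothing$, which is the statement of the theorem.

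The argument is essentially bookkeeping once the equivalence $a\in\mathcal{D}^{\circ}(F,\rho) \Leftrightarrow F=\mathpzc{F}(\psi(a))$ is in place, so there is no substantial obstacle. The only subtlety worth flagging is that in solving the quadratic one must justify discarding the negative root, which is handled by the standing assumption $F\ge 0$ and the non-negativity of $\mathpzc{F}$ (guaranteed by $w\ge 0$ via~\eqref{eq:F_func}); this also ensures $F^{\pm}\ge 0$, so the claim is non-vacuous on the relevant half-line of $F$ values.
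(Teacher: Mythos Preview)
Your argument is correct and considerably shorter than the paper's. You reduce the question to the tautology that a value outside the supremum--infimum range of a function is not attained, once the equivalence $a\in\mathcal{D}^{\circ}(F,\rho)\Leftrightarrow F=\mathpzc{F}(\psi(a))$ is established via the quadratic.

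The paper instead argues case by case: writing $F=F^{+}+\epsilon^{+}$ (resp.\ $F=F^{-}-\epsilon^{-}$) and $\cos\psi(a)=\cos\psi^{-}-\delta^{-}$ (resp.\ $\cos\psi^{+}+\delta^{+}$), it substitutes into~\eqref{eq:dPida_NC} and manipulates to obtain the strict inequalities $\delta E_{1}<0$ for $F>F^{+}$ and $\delta E_{1}>0$ for $0\le F<F^{-}$. This is more laborious, but it yields strictly more than the theorem asserts: it pins down the \emph{sign} of $\delta E_{1}$ outside $[F^{-},F^{+}]$, not merely its non-vanishing. That sign information is what the paper uses immediately afterwards to justify the names ``peel-off force'' (energy decreases monotonically in $a$ when $F>F^{+}$, so the film detaches completely) and ``peel-initiation force'' (energy increases monotonically when $F<F^{-}$, so the front recedes). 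Your route proves the theorem as stated but does not, by itself, deliver these physical corollaries; if you wanted them you would need to add the one-line observation that $\mathpzc{F}$ is the unique non-negative root and hence $\delta E_{1}(a;\rho,F)$ has the sign of $\mathpzc{F}(\psi(a))-F$.
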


  \begin{proof}
    Since $\psi^{\pm}$ are the maximum and minimum values of $\psi(a)$, respectively, and from~\eqref{eq:localComp} we have
    \begin{linenomath}
    \begin{equation}
      0\le \psi^{-}\le \psi(a)\le \psi^{+}\le \pi.
      \label{eq:BoundsOnPsi}
    \end{equation}
    \end{linenomath}
    From~\eqref{eq:BoundsOnPsi} and the fact that the cosine function is strictly decreasing in the interval $[0,\pi]$, we find that
    \begin{linenomath}
    \begin{equation}
      -1\le
      \cos\pr{\psi^{+}}
      \le
      \cos\pr{\psi(a)}
      \le
      \cos\pr{\psi^{-}}
      \le 1.
      \label{eq:CosInequalities}
    \end{equation}
    \end{linenomath}
    The inequalities~\eqref{eq:CosInequalities} allow us to express $\cos\pr{\psi(a)}$ as either $\cos\pr{\psi^{-}}-\delta^{-}$ or $\cos\pr{\psi^{+}}+\delta^{+}$; and, though they depend on $a$,  $\delta^{\mp}$,  are always non-negative.

    Let us first consider the case $F>F^{+}$.
    When $F$ is strictly greater than $F^{+}$ it can be represented as $F^{+}+\epsilon^{+}$ for some $\epsilon^{+}>0$.
    Using this representation for $F$ and expressing $\cos\pr{\psi(a)}$ as $\cos\pr{\psi^{-}}-\delta^{-}$ in~\eqref{eq:dPida_NC} we get that
    \begin{linenomath}
    \begin{equation}
      \delta E_{1}(a\, ;\rho, F) =-\frac{1}{2} \left(2 F^+ \left(\delta^-+\epsilon^+\right)+\epsilon^+
      \left(2 \delta^-+\epsilon^{+} + 2\right)\right) \dot{l}(a;\rho)+\cos\pr{\psi^{-}} \epsilon^+ \
      \dot{l}(a;\rho)
      \label{eq:DeltaEOneForFFPlus}.
    \end{equation}
    \end{linenomath}
    Equation~\eqref{eq:l} implies that the derivative of the peeled length $\dot{l}(a;\rho)$ is always positive. For that reason and since $\cos\pr{\psi^{-}}\ge -1$ it follows from~\eqref{eq:DeltaEOneForFFPlus} that when $F>F^{+}$
    \begin{linenomath}
    \begin{equation}
      \delta E_{1}(a\, ;\rho, F)
      \le
      -\frac{1}{2} \left(2 F^+ \left(\delta^-+\epsilon^+\right)+\epsilon^+
      \left(2 \delta^-+\epsilon^+\right)\right) \dot{l}(a;\rho).
      \label{eq:DeltaEOneForFFPlus2}
    \end{equation}
    \end{linenomath}
    It follows from~\eqref{eq:DeltaEOneForFFPlus2} and the facts that $\dot{l}(a;\rho)$ and $\epsilon^+$ are positive and $\delta^{-}$ and $F^+$ are non-negative that when $F>F^+$
    \begin{linenomath}
    \begin{equation}
      \delta E_{1}(a;\rho,F)<0.
      \label{eq:DeltaEOneForFFPlus3}
    \end{equation}
    \end{linenomath}
    Since for equilibrium it is necessary that $\delta E_{1}(a;\rho,F)=0$ it follows from~\eqref{eq:DeltaEOneForFFPlus3} that there can exist no equilibrium de-adhered lengths when $F>F^{+}$.

    Now consider the case $0\le F< F^{-}$.
    When $F<F^{-}$ it can be represented as $F^{-}-\epsilon^{-}$ for some $\epsilon^{-}>0$.
    Representing $F$ this way and expressing $\cos\pr{\psi(a)}$ as $\cos\pr{\psi^{+}}+\delta^{+}$ in~\eqref{eq:dPida_NC} we determine that
    \begin{linenomath}
    \begin{equation}
      \delta E_{1}(a;\rho,F)
      =
      \frac{1}{2} \left(2 F^- \left(\delta ^++\epsilon ^-\right)-\epsilon ^- \left(2 \delta ^++\epsilon ^--2\right)\right)\dot{l}(a;\rho)-\epsilon ^{-}  \cos\pr{\psi^+}\dot{l}(a;\rho).
      \label{eq:DeltaEOneForFFMinus}
    \end{equation}
    \end{linenomath}
    Since $\dot{l}(a;\rho)>0$ and $\cos\pr{\psi^{+}}\le 1$ it follows from~\eqref{eq:DeltaEOneForFFMinus} that when $F<F^{-}$
    \begin{linenomath}
    \begin{equation}
      \delta E_{1}(a;\rho,F)\ge
      \frac{1}{2} \left(2 F^- \left(\delta ^++\epsilon ^-\right)-\epsilon ^- \left(2 \delta ^++\epsilon ^-\right)\right)\dot{l}(a;\rho),
      \notag
    \end{equation}
    \end{linenomath}
    which can be re-arranged to read
    \begin{linenomath}
    \begin{equation}
      \delta E_{1}(a;\rho,F)
      \ge
      \frac{1}{2} \left(2 F \left(\delta ^++\epsilon ^-\right)+\left(\epsilon ^-\right)^2\right)\dot{l}(a;\rho).
      \label{eq:DeltaEOneForFFMinus2}
    \end{equation}
    \end{linenomath}
    Recalling that $\delta^+$ is non-negative and $\epsilon^-$ is positive it follows from~\eqref{eq:DeltaEOneForFFMinus2} that when $0\le F<F^{-}$
    \begin{linenomath}
    \begin{equation}
      \delta E_{1}(a;\rho,F)>0.
      \label{eq:DeltaEOneForFFMinus3}
    \end{equation}
    \end{linenomath}
    For the same reason as before, the inequality~\eqref{eq:DeltaEOneForFFMinus3} implies that when $F$ is less than $F^{-}$ but still non-negative then there cannot exist any de-adhered lengths that are in equilibrium.
  \end{proof}

  The inequality~\eqref{eq:DeltaEOneForFFPlus3} implies that the derivative of the total energy w.r.t $a$ will be negative when $F$ is  greater than $F^{+}$ irrespective of the value of $a$, $w$, or the nature of $\rho$ (see, e.g., Figures~\ref{fig:EnergyCurves}a and c).
  Thus, if $F>F^{+}$ the de-adhered length will grow without bound.
  Realistically, however, the de-adhered length will keep growing until the film completely detaches from the substrate.
  For that reason we call $F^+$ the \textit{peel-off force}.

  The result that~\eqref{eq:DeltaEOneForFFMinus3} holds in the case where $0\le F<F^{-}$ implies that the derivative of the total energy with respect to the de-adhered length  in that case is positive irrespective of any other details in the problem (see, for example, Figures~\ref{fig:EnergyCurves}b and d).
  Therefore, if $0\le F<F^{-}$ and the de-adhered length is initially naught, then the de-adhered length  will not grow or if it is initially  non-zero then it will keep decreasing, i.e., the peel front will keep receding until the entire thin film is adhered to the substrate.
  For this reason, we call $F^{-}$ the \textit{peel-initiation force}.

  %
  \begin{figure}[t!]
    \centering
    \includegraphics[width=\textwidth]{./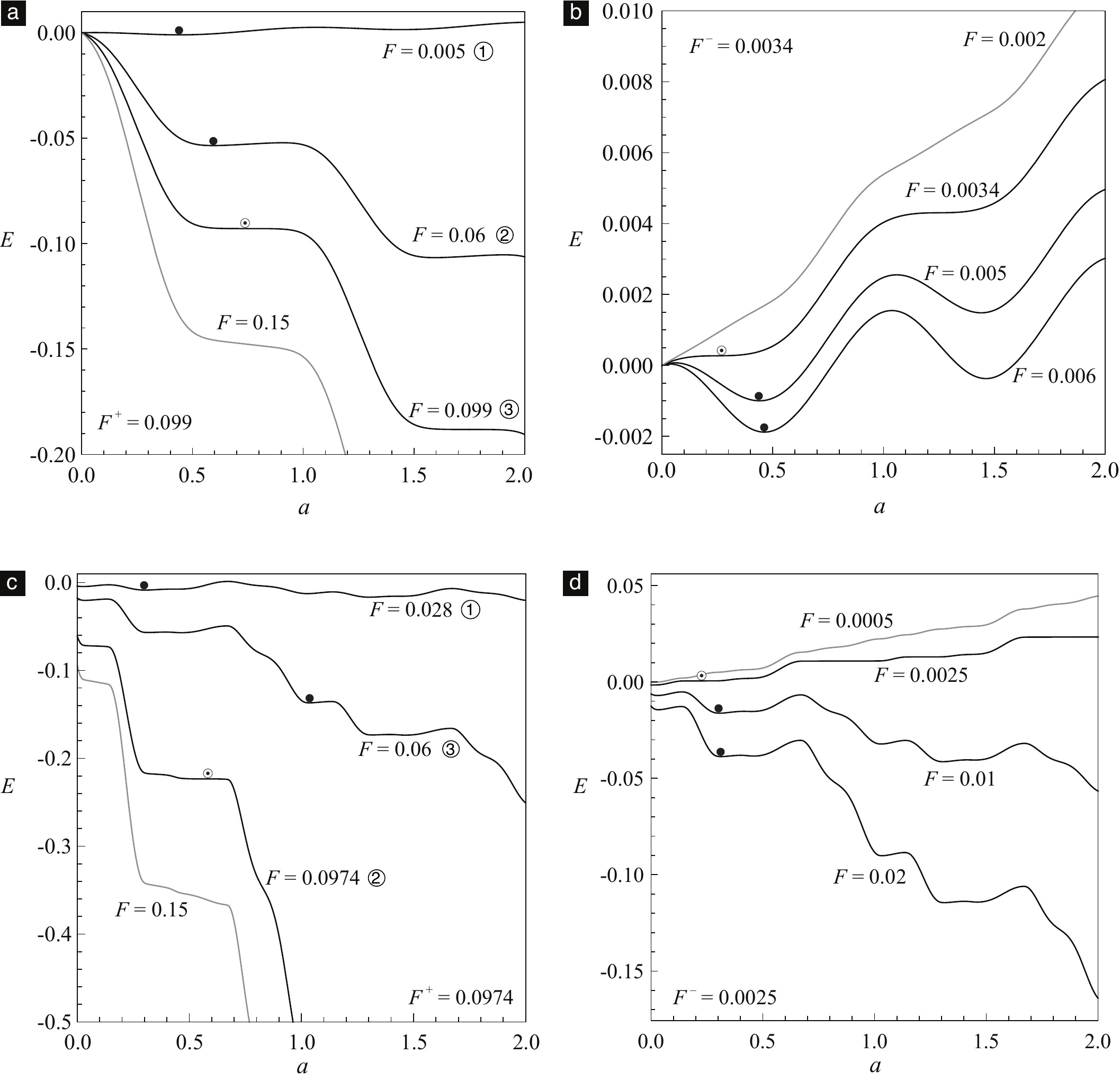}
    \caption{
    The plot of potential energy as a function of ${a}$ for different values of ${F}$ for peeling on the sinusoidal [(a)--(b)] and  complicated [(c)--(d)] surfaces. Note that there is no local minima of the energy when $F > F^+$ in (a) and (c). The labels \textcircled{1}--\textcircled{3} correspond to those marked in Figures~\ref{fig:FNonContact}b and d. There is also no local minima of the energy when $0 \le F < F^-$ in (b) and (d).
    }
    \label{fig:EnergyCurves}
  \end{figure}

  \subsubsection{Stability of equilibrium state}
  \label{sec:StaPeelingNoContact}
  We study the stability  of the local equilibria  by examining the sign of the second variation of the total potential energy. 
  Specifically, a configuration with the de-adhered length $a$ is  a stable equilibrium state \textit{iff} $a$ belongs to the set
  \begin{linenomath}
  \begin{equation}
      \Dst(F,\rho):=\left\{a\in \mathcal{D}^{\circ}(F,\rho)~|~\delta E_{2}(a\,;\rho,F)> 0\right\}.
      \label{eq:StableDomain}
  \end{equation}
  \end{linenomath}
  It is a neutral equilibrium state \textit{iff} $a$ belongs to the set
  \begin{linenomath}
  \begin{equation}
      \mathcal{D}^{\odot}(F,\rho):=\left\{a\in \mathcal{D}^{\circ}(F,\rho)~|~\delta E_{2}(a\,;\rho,F)= 0\right\},
      \label{eq:NeutralDomain}
  \end{equation}
  \end{linenomath}
  and is an unstable equilibrium state if $a$ belongs to the set $\mathcal{D}^{\otimes}(F,\rho)$, which is a set of all de-adhered lengths that belong to $\mathcal{D}^{\circ}(F,\rho)$ but not to $\Dst(F,\rho)$   or $\mathcal{D}^{\odot}(F,\rho)$. 

\paragraph{Stability and surface curvature}
Suppose $a$ is an equilibrium de-adhered length at the force $F$. 
Then it is necessary that $F$ and $a$ satisfy the  equation $F=F(a)$, where $F(\cdot)$ on the right hand side  is the function defined in~\eqref{eq:FasFuncofa}. 
Upon substituting $F$ in~\eqref{eq:d2Pida2_NC} with $F(a)$, and then simplifying the resulting equation,
we get that
\begin{linenomath}
\begin{equation}
  \delta E_{2}(a\, ;\rho, F(a))
  =
  -\frac{F(a)}{2} k({a}) \Dl(a;\rho)^2 \sin\pr{\psi(a)},
  \label{eq:d2Pida2_NC1}
\end{equation}
\end{linenomath}
where
\begin{linenomath}
\begin{equation}
    k({a}) = \DDro({a})/\dot{l}(a;\rho)^3
    \label{eq:curvature}
\end{equation}
\end{linenomath}
is the signed curvature of the graph of $\rho$. 
The mean curvature of the substrate's surface at the point whose coordinates w.r.t to $\physe_1$ and $\physe_2$ are $a$ and $\rho(a)$, respectively, equals $k(a)/(2\lambda)$. 
Therefore, we will often refer to $k(a)$ as the substrate's surface curvature. 

If $w=0$, then from~\eqref{eq:FasFuncofa} we have that $F(a)=0$. Consequently, from~\eqref{eq:d2Pida2_NC1},  $\delta E_2=0$ for all $a$. 
Therefore, when $w=0$ all states are neutral-equilibrium states. 
In the following section we take that $w$ is positive (recall that $w\ge 0$). 

If $k(a)$ vanishes then it follows from~\eqref{eq:d2Pida2_NC1} that $a$ belongs to $\Dnt$, i.e., that the corresponding state is a neutral equilibrium state. 

It follows from the definitions of $a^{\pm}$, $\Dro^{\pm}$, and $k$ and the smoothness of $\rho$ that $k\pr{a}=0$ \textit{iff} $a=a^{\pm}$. 
So, if $k(a)$ does not vanish then $a$ is different from $a^{\pm}$, which implies from the definitions of  $\Dro^{\pm}$, $\psi$, and $\psi^{\pm}$ that $\psi(a)$ is different from $\psi^{\pm}$. 
This last deduction in conjunction with \eqref{eq:BoundsOnPsi} implies that when $k(a)$ is not naught, $\sin(\psi(a))$ is positive. 
Hence, it follows from~\eqref{eq:d2Pida2_NC1} that the configuration is stable (resp. unstable)
when $k(a)$ is negative (resp. positive). 
These results, which connect an equilibrium state's stability to the surface curvature, are illustrated by  Figures~\ref{fig:FNonContact}b--c for a sinusoidal surface and by Figures~\ref{fig:FNonContact}e--f for a complicated surface. 

\paragraph{Positive, negative, and zero values of the function $\DF(\cdot)$ imply stable, unstable, and neutral equilibria, respectively }

  Again, let $a$ be an equilibrium de-adhered length at the force $F$. 
From~\eqref{eq:FasFuncofa} we have that
  \begin{linenomath}
  \begin{equation}
    \dot{F}(a)
    =
    -k(a)
    \Dl(a;\rho)
    \sin(\psi(a))
    \left( 1 - \frac{1-\cos(\psi(a))}{(2{w}+(1-\cos(\psi(a)))^2)^{1/2}}\right).
    \label{eq:dFda}
  \end{equation}
  \end{linenomath}
Recall that we deduced that  when $w=0$ all  configurations are  neutral equilibrium configurations. 
Therefore, in the   following two paragraphs we take that $w>0$. 

Say $\dot{F}(a)$ vanishes. 
It can be checked using~\eqref{eq:Dl} that  $\Dl(a;\rho)$ is always positive, and
since we have assumed that $w>0$ it can be shown that the expression within the large parenthesis on the right hand side of~\eqref{eq:dFda} is always positive. 
Therefore, if $\dot{F}(a)$ vanishes then we have the following three cases from~\eqref{eq:dFda}:\textit{ (i)} the factor $k(a)$ vanishes, \textit{(ii)} the factor $\sin(\psi(a))$ vanishes, \textit{(iii)} both these factors vanish. 
The factor $k(a)$ vanishes in both case \textit{(i)} and \textit{(iii)}. 
Let us focus on case \textit{(ii)}. 
If  $\sin(\psi(a))$ vanishes
then we know from~\eqref{eq:BoundsOnPsi} that  $a=a^{\pm}$, which then implies, based on the discussion following~\eqref{eq:d2Pida2_NC1}, that $k(a)$ has to also vanish.  
then we know from~\eqref{eq:BoundsOnPsi} that  $a=a^{\pm}$, which then implies, based on the discussion contained in the third paragraph following the one containing~\eqref{eq:d2Pida2_NC1} that, $k(a)$ has to also vanish.  
Thus, $k(a)$ vanishes in all three cases. 
That is,  if $\dot{F}(a)$ is equal to zero  then $k(a)$ is also equal to zero. 
This last deduction in light of the results presented in \textit{Stability and surface curvature} implies that if  $\dot{F}(a)$ vanishes then the configuration corresponding to $a$ is a neutral-equilibrium configuration. 


Now say that $\dot{F}(a)$ is positive (resp. negative). 
As previously stated, since we have assumed that $w>0$ the expression within the large paranthesis on the right hand side of~\eqref{eq:dFda} is always positive. 
The factor $\sin(\psi(a))$ is positive since we can show using~\eqref{eq:BoundsOnPsi} that it is always non-negative and if it were to vanish then that would contradict the assumption that  $\dot{F}(a)$ is non-zero. 
This, in conjuction with~\eqref{eq:dFda}, imply that $k(a)$ is negative (resp. positive) whenever  $\dot{F}(a)$ is positive (resp. negative). 
In light of the results presented in  \textit{Stability and surface curvature}, this last deduction implies that when $\dot{F}(a)$ is positive (resp. negative) then the corresponding equilibrium configuration is stable  (resp. unstable). 

These results are illustrated in Figures~\ref{fig:FNonContact}b and e using the example surface profiles shown in Figures~\ref{fig:FNonContact}c and f. 

\subsection{Peeling process that might involve contact}
\label{sec:PeelingWithContact}

\begin{figure}[t!]
    \centering
    \includegraphics[width=.8\textwidth]{./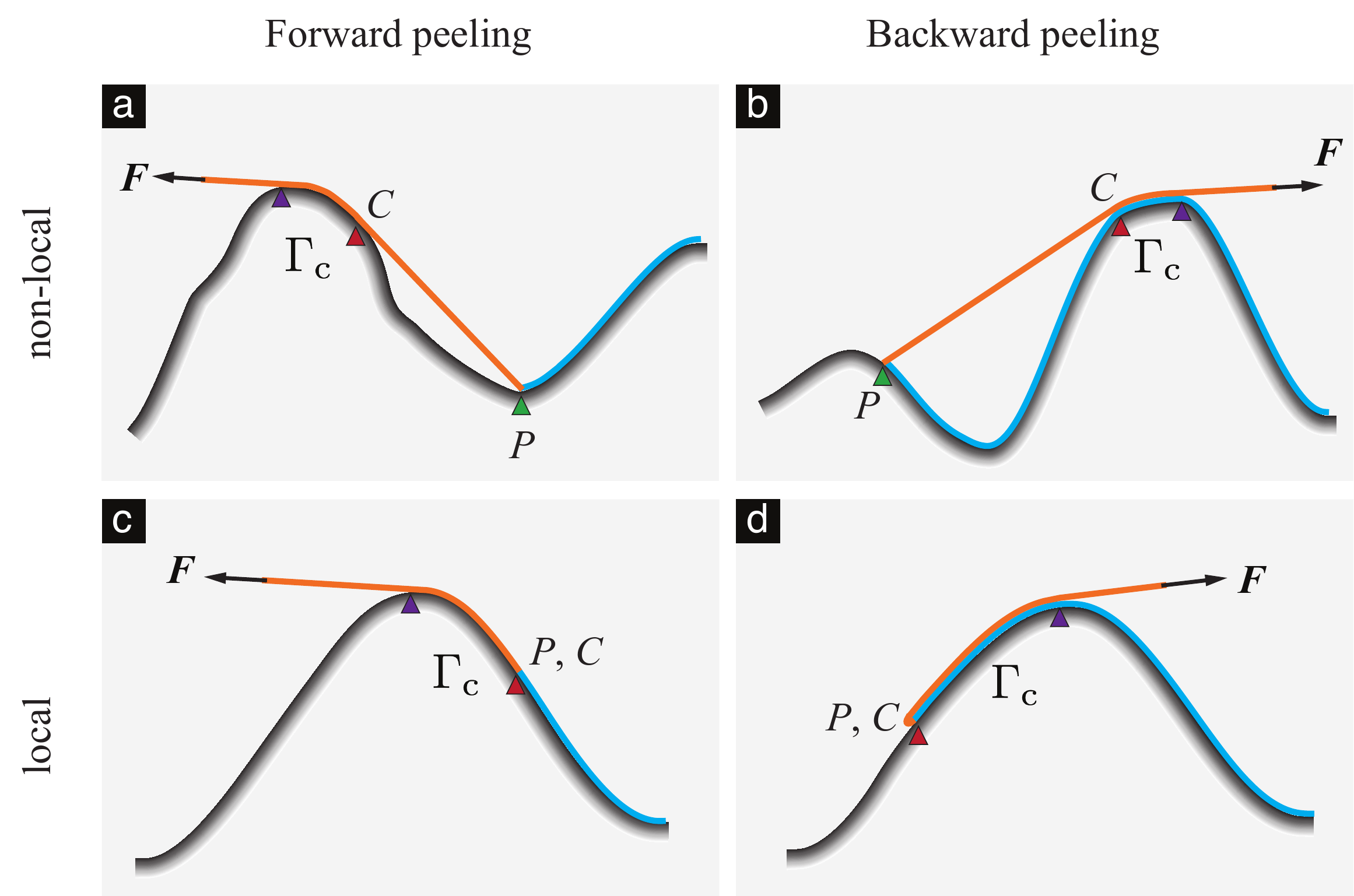}
    \caption{A schematic of peeling with contact. (a) Forward peeling, non-local contact. (b) Backward peeling, non-local contact. (c) Forward peeling, local contact. (d) Backward peeling, local contact.}
    \label{fig:CasesofContactConfiguration}
\end{figure}

If $\theta$ is kept fixed during the peeling process and that constant $\theta$ violates the global compatibility condition~\eqref{eq:GlobalComp}, then there will exist some configurations during the peeling process that will involve contact\footnotemark. 
We provide a procedure for determining whether or not a given configuration involves contact in \S\ref{sec:ProcedureCatofKappa}. 
If the configuration $\boldsymbol{\kappa}$ does not involve contact, the results needed to compute the force $F$ so that  $(\conf, F)$ is an equilibrium state and the results needed to determine the stability of  $(\conf, F)$ are given in \S\ref{sec:PeelingNoContact}. 

\footnotetext{
This, of course, does not mean that in such a peeling process all configurations will involve contact. %
That is, there can exist configurations that involve no contact during parts of the peeling process (see, e.g., subfigure (b) in Figure~\ref{fig:SinFContact}).
}

When $\conf$ involves contact the primitive conditions that determines whether or not a state $(\conf, F)$ is in equilibrium remain the same as before.  
Specifically, even when a configuration $\boldsymbol{\kappa}$ involves contact, the state $(\conf, F)$ is an equilibrium state \textit{iff} the de-adhered length $a$ in $\conf$ belongs to the set $\mathcal{D}^{\circ}(F,\rho)$, which is defined in \eqref{eq:EqmDomain}. 
Similarly, it qualifies as a stable, neutral, or unstable state depending on whether the de-adhered length $a$  belongs to the sets $\mathcal{D}^{\bullet}(F,\rho)$, $\mathcal{D}^{\odot}(F,\rho)$, or $\mathcal{D}^{\otimes}(F,\rho)$, respectively. These sets are defined in \S\ref{sec:StaPeelingNoContact}. 

What makes the analysis of the case involving contact more challenging is the calculation of the functions $\delta E_1$ and $\delta E_2$, which are needed for the construction of the sets $\mathcal{D}^{\circ}(F,\rho)$,  $\mathcal{D}^{\bullet}(F,\rho)$, etc. These functions are defined in \eqref{eq:dPi_NC} in terms of the asymptotic expansion of $\delta E$ as $\delta a \to 0 $. The calculation of $\delta E$'s asymptotic expansion is challenging due to the presence of the term $- \boldsymbol{F}\cdot \delta \boldsymbol{u}$ in~\eqref{eq:dE}. In the case involving contact the term $ -\boldsymbol{F}\cdot \delta \boldsymbol{u}$ simplifies to $ -F \delta u$, where $\delta u$ is $\delta \boldsymbol{u}$'s magnitude. This is due to the fact that $\delta \boldsymbol{u}$  is always in the same direction as $\boldsymbol{F}$ (see Figures~\ref{fig:ForwardPeeling} and \ref{fig:BackwardPeeling}) when a configuration involves contact. This remains true irrespective of whether the contact region consists of a single contact patch (e.g., see Figure~\ref{fig:CasesofContactConfiguration}) or several contact patches (e.g., see subfigure (a) in either Figure~\ref{fig:ForwardPeeling} or \ref{fig:BackwardPeeling}). Thus, calculation of $\delta E$'s asymptotic expansion requires the calculations of $\delta u$'s asymptotic expansion, which in this current case is non-trivial. To elaborate, in the case not involving contact, it is straightforward to determine the asymptotic expansion of $\delta u$, e.g. through the use of~\eqref{eq:du_NC} and~\eqref{eq:deltalAsymExpan}. While now in the case involving contact this exercise is relatively more difficult.

We could not obtain a general, closed-form expression for the asymptotic expansion of $\delta u$ when the configuration involved contact. However, in~\S\ref{sec:EnergyVariation_C} we present a family of four analytical, but not closed-form,  expressions for calculating $\delta u$'s asymptotic expansion,  and from that $\delta E_1$ and $\delta E_2$, that apply to special categories of contact configurations. We describe these four categories, to whom we henceforth refer to as \textit{C.1}, \textit{C.2}, etc.,   shortly in~\S\ref{subsubsec:FourCategories}, but we note here that it will follow from their definitions that any  contact configuration can be uniquely placed into in one of them.

From the family of $\delta E_1$ functions given in \S\ref{sec:EnergyVariation_C}, which apply to different categories of contact configurations, we found that,  interestingly,  irrespective of which category a contact configuration $\conf$ belongs to the force $F$ needed to make the state $(\conf, F)$  an equilibrium state is always   $\mathcal{F}(\psi(a))$, where $\mathcal{F}$ is defined in~\eqref{eq:F_func}. However, it follows from the family of $\delta E_2$ functions given in \S\ref{sec:EnergyVariation_C} that $\conf$'s category is still relevant for determining the nature of $(\conf, F(a))$'s stability.

In summary, our method for simulating a peeling process in which $\theta$ violates the  global compatibility condition at some stage of the peeling process is as follows. 
Let the peeling experiment be defined by prescribing the sequence $(a_i, \theta_i)$, where $i=1$, $2,$ etc., and the symbols $a_i$ and $\theta_i$ are the de-adhered length and the nominal peel angle, respectively, in the $i^{\rm th}$ step of the experiment. 
We compute $\psi(a_i)$, the true peeling angle for the  $i^{\rm th}$ step, using Algorithm~\ref{algo:TruePeelingAngle}. 
We place the configuration $\conf_i$  into one of the four categories, \textit{C.1--4}, using  $\theta_i$ and $\psi(a_i)$ (see \S\ref{sec:ProcedureCatofKappa} for details). 
We then compute the  force  $F_i$  such that the state $\mathcal{S}_i=(\conf_i,F_i)$ becomes an equilibrium state as $\mathcal{F}(\psi(a_i))$. 
We determine the nature of $\mathcal{S}_i$'s stability  by computing $\delta E_2(a_i;\rho, F_i)$ and constructing the sets $\Dst$, $\Dnt$, etc. 
When the contact configuration $\conf$ belongs to either \textit{C.1} or \textit{C.2} the Algorithm~\ref{algo:TruePeelingAngle} also provides the value of the parameter $\ell$, which is the distance between the peeling and contact fronts. So, when $\conf$ belongs to either \textit{C.1} or \textit{C.2} we use that value in conjunction with~\eqref{eq:d2Pida2_C12} to compute $\delta E_2(a_i;\rho, F_i)$.
When  $\conf$ belongs to \textit{C.3} or \textit{C.4} we compute the value of $\delta E_2$ using~\eqref{eq:d2Pida2_C3} or \eqref{eq:d2Pida2_C4}, respectively. 

We demonstrate our method by using the  same  two example surfaces that we previously considered in \S\ref{sec:PeelingNoContact}. The schematics of these simple and complicated wavy surfaces are shown, e.g., in Figures~\ref{fig:FNonContact}a and d, respectively.

On each surface we simulated two (virtual) peeling experiments. In the first experiment---forward peeling (defined in~\S\ref{subsubsec:FourCategories})---the peeling angle was kept fixed at a value of $\pi/3$ through out the experiment, while in the second one---backward peeling (defined in~\S\ref{subsubsec:FourCategories})---it was kept fixed at $3\pi/4$. The constant nominal peeling angles we chose, namely $\pi/3$ and $3\pi/4$, violated the global compatibility condition on both our example surfaces. Therefore, the results from \S\ref{sec:PeelingNoContact} cannot be used to simulate these experiments,  for  instance to generate the set
\begin{linenomath}
\begin{equation}
\bigcup_{F\in [F^{-},F^{+}]}\mathcal{D}^{\circ}(F,\rho)\cp \{F\}
\label{def:EquilibriumStates}
\end{equation}
\end{linenomath}
for these experiments. Note that a typical point in~\eqref{def:EquilibriumStates} represents an equilibrium state, with its abscissa denoting the state's de-adhered length and its ordinate the state's force. Therefore, we applied our method, which we introduced earlier in this section, to the sequence $(a_i, \theta_0)$ ($\theta_0=\pi/3,~3\pi/4$)  and computed the  sequence $(F_i)$, where $F_i$ is the equilibrium force corresponding to $a_i$; And constructed (a subset of)~\eqref{def:EquilibriumStates}, alternately, as $\left\{(a_i,F_i)\right\}$. The sets~\eqref{def:EquilibriumStates} that we generated this way for the forward and backward peeling cases are shown  in subfigures (b) and (e), respectively, of Figure~\ref{fig:SinFContact} for the simple wavy surface and in Figure~\ref{fig:ZigFContact} for the complicated wavy surface.

Note that our method also determines the stability of an equilibrium state and informs us whether or not that state involves contact. In the subfigures (b) and (e) of Figures~\ref{fig:SinFContact} and~\ref{fig:ZigFContact} the stable equilibrium states are denoted using solid/filled symbols while unstable states are denoted using hollow/unfilled symbols. In the subfigures we identify the states that involve contact by placing them over a yellow background. As can be noted from the subfigure, a yellow region is preceded  and followed by white regions, and \textit{vice versa}. Thus, when the global compatibility condition is violated a sequence of configurations involving contact can be followed by a sequence  of configurations not involving contact, and so on.

Finally, our method also provides the true peeling angle sequences $\pr{\psi(a_i)}$ in the experiments. These are shown in subfigures (a) and (b) of Figures~\ref{fig:SinFContact} and~\ref{fig:ZigFContact}.

A few representative configurations from the peeling experiments are explicitly sketched in subfigures (c) and (f) of Figures~\ref{fig:SinFContact} and~\ref{fig:ZigFContact}.

\subsubsection{The four categories of contact configurations}
\label{subsubsec:FourCategories}
A configuration involving contact can be placed into one of the following four categories. 
\begin{enumerate}[label=\textit{C.\arabic*}]
\item  Forward-peeling, non-local contact (Figure~\ref{fig:CasesofContactConfiguration}a),
\item  Backward-peeling, non-local contact (Figure~\ref{fig:CasesofContactConfiguration}b),
\item  Forward-peeling, local contact (Figure~\ref{fig:CasesofContactConfiguration}c), and
\item  Backward-peeling, local contact (Figure~\ref{fig:CasesofContactConfiguration}d).
\end{enumerate} 
We call a configuration a \textit{forward-peeling}  configuration if  the  $\theta$ in it is less than $\pi/2$, and a  \textit{backward-peeling} configuration otherwise. 
Roughly speaking, we consider configurations of the type shown in Figures~\ref{fig:CasesofContactConfiguration}a and b as those that involve non-local contact, and configurations of the type shown in  Figures~\ref{fig:CasesofContactConfiguration}c and d as those that involve local contact.  
We define local and non-local contact precisely  by introducing the notions of \textit{contact region} and \textit{contact front}, which we discuss next. 

We define the contact region  corresponding to the deformed configuration $\boldsymbol{\kappa}$ as $\Gamma_{c}=\{x_1 \in \mathcal{D}~|~\boldsymbol{x}(x_1)\in \partial\mathbb{S}\}$, where 
\begin{linenomath}
\begin{equation}
    \partial \mathbb{S}
    =\left\{x_1\physe_1+x_2\physe_2+x_3\physe_3 \in \mathbb{E}~|~(x_1,x_2,x_3)\in\mathbb{R}^3~\text{and}~x_2 = \rho(x_1) \right\}
    \label{eq:SubstrateSurfaceGeometry}
\end{equation}
\end{linenomath}
is the substrate's surface (cf.~\eqref{eq:SubstrateGeometry}) and $\boldsymbol{x}$ is defined in \eqref{eq:DeformationMappingFromManifoldToCurrent}. Let $c$ be the point in $\Gamma_c$ that is closest to $\Gamma_a$ in $\mathcal{D}$'s topology; recall here that $\Gamma_a$ is the adhered region corresponding to the configuration $\boldsymbol{\kappa}$. We define the contact front as $C=\{O+\boldsymbol{x}(c)+x_3\physe_3\in \mathcal{E}~|~\lvert x_3\rvert \le b/2  \}$.

We say that a deformed configuration $\conf$ involves local contact \textit{iff} $C=P$ and involves non-local contact  otherwise.

\subsubsection{Procedure for determining a configuration's type---contact or non-contact---and a contact configuration's category}
\label{sec:ProcedureCatofKappa}

\begin{algorithm}[ht!]
\caption{Procedure for determining the category of a  configuration}\label{algo:TruePeelingAngle}
\begin{algorithmic}[1]
\Procedure {Generate } {true peeling angle $\psi(a)$ and distance $\ell$}

\State{\textbf{Input:} Substrate's surface profile $\rho$, nominal peeling angle $\theta$, and de-adhered length $a$}

\State{Let $\Delta \rho(x_1) := \rho(a)-\tan\pr{\theta}\pr{x_1-a}-\rho\pr{x_1}$, where $x_1\in \mathbb{R}$.}

\If{$\Delta \rho(x_1) > 0$ for all $x_1$ satisfying $\text{sgn}\pr{(\theta-\pi/2)(x_1-a)} > 0$\footnotemark}
    \Comment{No contact}
    \State{
        \begin{equation*}
            \psi(a) = \theta + \tan^{-1}(\Dro(a)).
        \end{equation*} \Comment{c.f. \eqref{eq:psi_NC}}
    }

\Else
    \If{$k(a) \le 0 $ and $\Dro(a) \text{sgn}(\theta-\pi/2) > 0$} \Comment{Local type contact}
    \State{
        \begin{equation*}
            \psi(a) = \begin{cases}
                        0, & \theta < \pi/2, \\
                        \pi, & \theta \ge \pi/2.
                      \end{cases}
        \end{equation*}
    }
    \Else \Comment{Non-local type contact}
    \State{Find the abscissa of contact front $c$ such that
        \begin{equation*}
            \frac{\rho(a)-\rho(c)}{a-c} = \Dro(c) \quad \text{for} \quad \mathrm{sgn}\pr{(\theta-\pi/2)(c-a)} > 0.
        \end{equation*}
    }
    \State{Then
        \begin{equation*}
            \psi(a) = \begin{cases}
                        \tan^{-1}\pr{\Dro(a)} - \tan^{-1}\pr{\Dro(c)}, & \theta < \pi/2, \text{ forward peeling,} \\
                        \pi + \tan^{-1}\pr{\Dro(a)} - \tan^{-1}\pr{\Dro(c)}, & \theta \ge \pi/2, \text{ backward peeling,}
                      \end{cases}
        \end{equation*}
    }
    \State{and
        \begin{equation*}
          \ell = \pr{(a-c)^2 + \pr{\rho(a)-\rho(c)}^2}^{1/2}.
        \end{equation*}
    }
    \EndIf
\EndIf
\EndProcedure
\end{algorithmic}
\end{algorithm}

\footnotetext{We define the $\text{sgn}(\cdot)$ function  as:
\begin{equation*}
    \text{sgn}(x)= \begin{cases}
                        +1, & x \ge 0, \\
                        -1, & x < 0.
                    \end{cases}
\end{equation*}
}

When $\theta$ satisfies the global compatibility condition then we know that the configuration $\conf$ will be of non-contact type. When the global compatibility condition is violated, as we describe in the next few paragraphs, determining whether or not $\conf$ involves contact, i.e., determining its type, and if it involves contact then determining the contact category that $\conf$ belongs to essentially comes down to determining the true peeling angle $\psi(a)$.

The true peeling angle is defined in \S\ref{subsubsec:ContactConditions}. When $\theta$ satisfies the global compatibility condition  $\psi(a)$ is given by~\eqref{eq:psi_NC}. When  $\theta$ violates the the global compatibility condition it can be  computed using the numerical procedure that we present in Algorithm~\ref{algo:TruePeelingAngle}.

Given a configuration $\conf$, if the true peeling angle $\psi(a)$ in it is different from~~\eqref{eq:psi_NC} then $\conf$ is a contact type configuration. Otherwise, it is a non-contact type configuration.

For placing a contact type configuration $\conf$ into one of the four categories described in \S\ref{subsubsec:FourCategories} it is sufficient to  know whether $\conf$ is a forward or backward peeling configuration and whether the contact in it is of the local or the non-local type.

The configuration $\conf$ is a forward peeling configuration if $\theta<\pi/2$, and a backward peeling configuration otherwise. 
A contact configuration $\conf$ involves non-local or local contact depending on whether the true peeling angle in it, $\psi(a)$, lies in the interior or on the boundary of the set $[0,\pi]$.

  \subsubsection{Asymptotic expansion of \texorpdfstring{$\delta u$}{} and the functions \texorpdfstring{$\delta E_1$}{} and  \texorpdfstring{$\delta E_2$}{} for the different categories of contact configurations}
  \label{sec:EnergyVariation_C}

  \paragraph*{Categories \textit{C.1--2}}

  In~\S\ref{sec:Appen_du} we show that for these categories
  \begin{linenomath}
  \begin{equation}
  \begin{aligned}
    \delta {u} =
    & \left(1+\varepsilon - \cos(\psi(a)) \right) \dot{l}({a};\rho) \delta{a} \\
    &+ \frac{1}{2}\left( (1+\varepsilon) \ddot{l}({a};\rho) - \frac{\dot{l}({a};\rho)^2 }{\ell} \sin^2(\psi(a)) + \frac{\pr{\sin(\psi(a))-\cos(\psi(a))\Dro(a)}\DDro(a)}{\Dl(a;\rho)} \right) (\delta {a})^2 + o((\delta {a})^2),
    \label{eq:du_C}
  \end{aligned}
  \end{equation}
  \end{linenomath}
  where $\ell$ is the distance between $P$ and $C$. Recall that $P$ and $C$ denote the peeling and  contact front, respectively (e.g., see Figure~\ref{fig:CasesofContactConfiguration}). We introduced these notions in \S\ref{sec:geometry} and \S\ref{subsubsec:FourCategories}.
  Substituting the $\delta u$ appearing in~\eqref{eq:dE} with the expression appearing on the right hand side of~\eqref{eq:du_C} and then comparing the resulting equation with~\eqref{eq:dPi_NC} we get that
   \begin{linenomath}
    \begin{subequations}
      \begin{align}
        \delta E_{1}(a\,;\rho, F) &:= \left( -\frac{1}{2}{F}^2 + {F}(\cos(\psi(a))-1) + {w} \right)\dot{l}({a};\rho), \label{eq:dPida_C} \\
        \delta E_{2}(a\,;\rho, F) &:= - \frac{1}{2} \left( \left(\frac{{F}^2}{2}+{F}-{w} \right) \ddot{l}({a};\rho) - F \left( \frac{\dot{l}(a;\rho)^2}{\ell} \sin^2(\psi(a)) - \frac{\pr{\sin(\psi(a))-\cos(\psi(a))\Dro(a)}\DDro(a)}{\Dl(a;\rho)} \right) \right) \label{eq:d2Pida2_C12}.
      \end{align}
    \end{subequations}
  \end{linenomath}

\paragraph*{Categories \textit{C.3}}
As can be noted from Figure~\ref{fig:ForwardPeeling}c for this category, $\delta u = \varepsilon\delta l$. With~\eqref{eq:deltalAsymExpan} we have
\begin{linenomath}
\begin{equation}
  \delta{u} = \varepsilon \pr{\dot{l}({a};\rho) \delta{a} + \frac{1}{2} \ddot{l}({a};\rho) (\delta {a})^2 + o\pr{(\delta {a})^2}}.
  \label{du_C3}
\end{equation}
\end{linenomath}
 As before, substituting  $\delta u$ from~\eqref{du_C3} into~\eqref{eq:dE} and comparing the resulting equation with~\eqref{eq:dPi_NC} we get that
\begin{linenomath}
  \begin{subequations}
    \begin{align}
      \delta E_{1}(a\, ;\rho, F) &:= -\left( \frac{1}{2}{F}^2 - {w} \right)\dot{l}({a};\rho), \label{eq:dPida_C_iii_f} \\
      \delta E_{2}(a\, ;\rho, F) &:= - \frac{1}{2} \left(\frac{{F}^2}{2}-{w} \right) \ddot{l}({a};\rho). \label{eq:d2Pida2_C3}
    \end{align}
  \end{subequations}
\end{linenomath}

\paragraph*{Categories \textit{C.4}}
As can be noted from Figure~\ref{fig:BackwardPeeling}c for this category, $\delta u = (1+\varepsilon)\delta l$. With~\eqref{eq:deltalAsymExpan} we have
\begin{linenomath}
\begin{equation}
  \delta{u} = (1+\varepsilon) \pr{\dot{l}({a};\rho) \delta{a} + \frac{1}{2} \ddot{l}({a};\rho) (\delta {a})^2 + o\pr{(\delta {a})^2}}.
  \label{du_C4}
\end{equation}
\end{linenomath}
It follows from~\eqref{du_C4}, \eqref{eq:dE}, and \eqref{eq:dPi_NC} that
\begin{linenomath}
\begin{subequations}
\begin{align}
    \delta E_{1}(a\, ;\rho, F) &:= -\left( \frac{1}{2}{F}^2 +2F - {w} \right)\dot{l}({a};\rho), \label{eq:dPida_C_iii_b} \\
    \delta E_{2}(a\, ;\rho, F) &:= - \frac{1}{2} \left(\frac{{F}^2}{2}+2F-{w} \right) \ddot{l}({a};\rho). \label{eq:d2Pida2_C4}
\end{align}
\end{subequations}
\end{linenomath}
\begin{figure}[t!]
\centering
\includegraphics[width=\textwidth]{./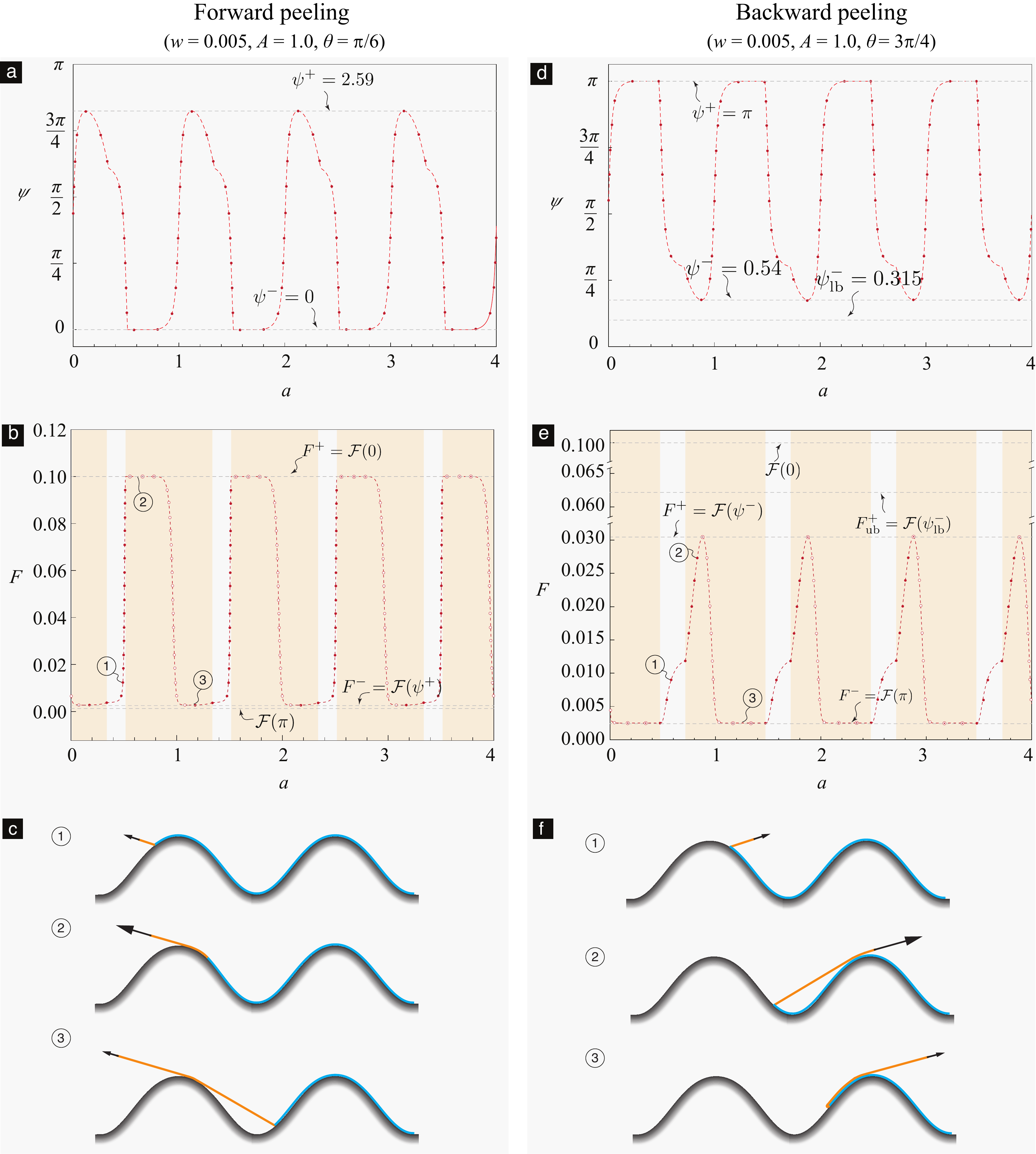}
\caption{Forward peeling (a--c) and backward peeling (d--f) on a sinusoidal surface. The function $\varrho$ of surface profile is the same as the one considered in Figure~\ref{fig:FNonContact}a. (a) and (d) show the numerically calculated true peeling angle $\psi$, (b) and (e) show the ${F}$--${a}$ plot, and (c) and (f) show representative peeling configurations corresponding to the equilibrium states marked in (b) and (e), respectively. In (b) and (e), the stable, neutral, and unstable equilibrium state are marked with a solid dot, circled dot, and circle, respectively. The yellow regions indicate the occurrence of contact during peeling, while white regions indicate no contact.
For the forward (resp. backward) peeling, the peel-off force $F^+ = 0.1$ (resp. $F^+=0.03$), which is much greater than the peel-off force, 0.033 (resp. 0.003), for peeling on a flat surface with the same nominal peeling angle.
}
\label{fig:SinFContact}
\end{figure}

\begin{figure}[ht!]
\centering
\includegraphics[width=\textwidth]{./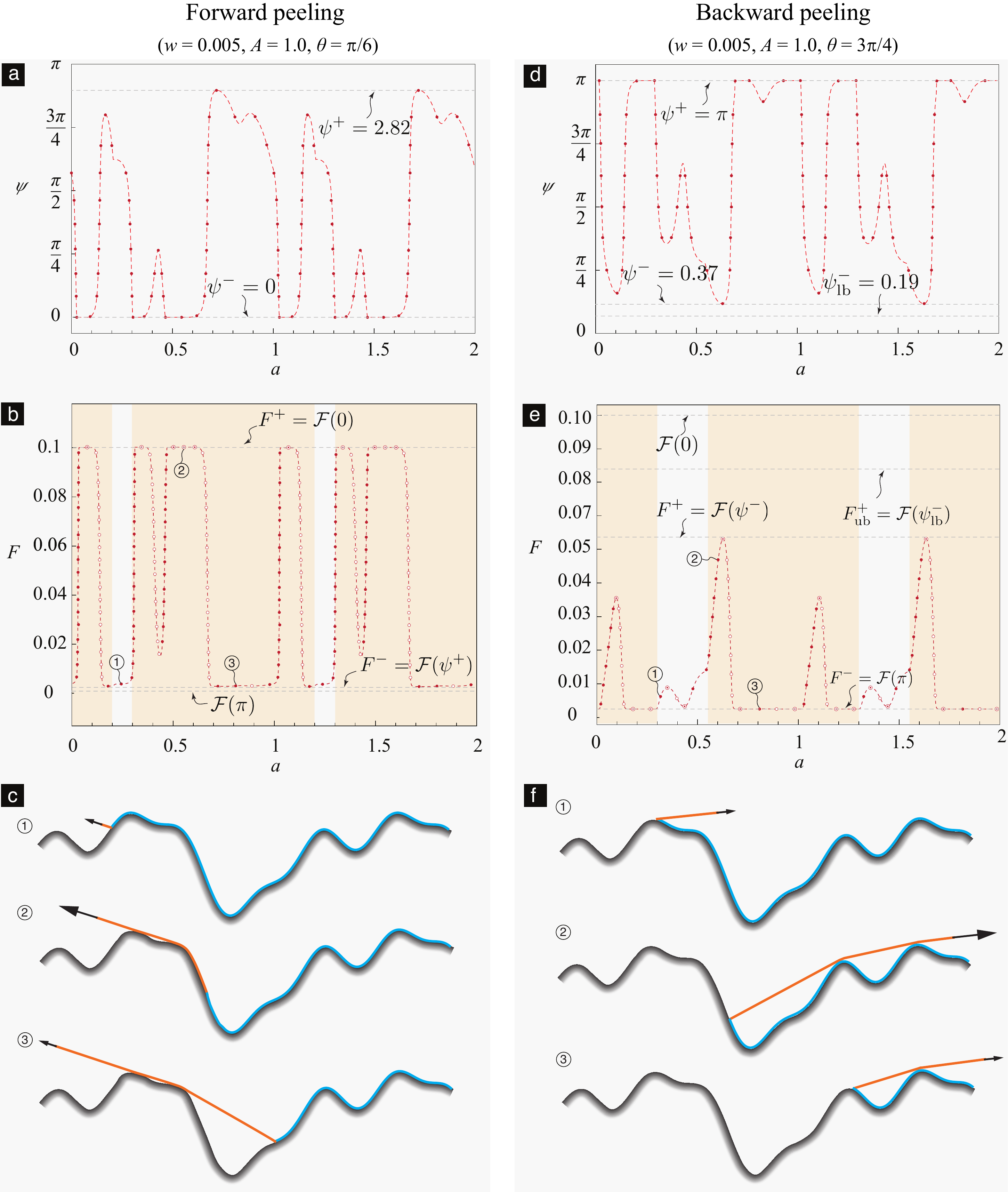}
\caption{Forward peeling (a--c) and backward peeling (d--f) on a complicated surface. The function $\varrho$ of surface profile is the same as the one considered in Figure~\ref{fig:FNonContact}d. (a) and (d) show the numerically calculated true peeling angle $\psi$, (b) and (e) show the ${F}$--${a}$ plot, and (c) and (f) show representative peeling configurations corresponding to the equilibrium states marked in (b) and (e), respectively. The stable, neutral, and unstable equilibrium states are marked with a solid dot, circled dot, and circle, respectively. The yellow regions indicate the occurrence of contact during peeling, while white regions indicate no contact.
For the forward (resp. backward) peeling, the peel-off force $F^+ = 0.1$ (resp. $F^+=0.053$), which is much greater than the peel-off force, 0.033 (resp. 0.003), for peeling on a flat surface with the same nominal peeling angle.
}
\label{fig:ZigFContact}
\end{figure}

  \subsubsection{Remarks on peeling with contact}
  \label{sec:ContactRemarks}
    The results for peeling involving contact shown in Figures~\ref{fig:SinFContact} and~\ref{fig:ZigFContact} only apply to the sample surfaces shown in Figures~\ref{fig:FNonContact}a and d, respectively. 
    That is, we do not have a general, closed-form, analytical theory for the case in which at least some configurations involve contact. 
    However, we can make the following general, interesting, remarks with regard to the case involving contact. 

Theorem~\ref{proof:FmpNoEqm} also holds when the global compatibility condition~\eqref{eq:GlobalComp} is violated.

\begin{theorem}
  During forward peeling when the global compatibility condition~\eqref{eq:GlobalComp} is violated, specifically when $\theta < -\tan^{-1}\pr{\Drom}$, the peel-off force achieves its upper bound, which is $(2w)^{1/2}$. 
  \label{proof:forward_ub}
\end{theorem}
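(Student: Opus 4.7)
The plan is to exhibit an $a\in\mathcal{D}$ at which the true peeling angle $\psi(a)$ equals zero, so that $\mathpzc{F}(\psi(a))$ attains the value $(2w)^{1/2}$, which is the stated upper bound on $F^{+}$.

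First I would recall that $\mathpzc{F}$, defined in~\eqref{eq:F_func}, is strictly decreasing on $[0,\pi]$ and attains its maximum value $(2w)^{1/2}$ at $\psi=0$. Since $F^{+}=\sup\{\mathpzc{F}(\psi(a))~|~a\in\mathcal{D}\}$ is already bounded above by $(2w)^{1/2}$ (see the remarks immediately below~\eqref{eq:Fmp}), it suffices to produce a single $a$ with $\psi(a)=0$. My candidate is $a=a^{-}$, where $a^{-}$ is defined in~\eqref{eq:aPlusMinus}. Since $\varrho$ is $1$-periodic, non-constant, continuous, and has zero mean, $\Dro$ must change sign, so $\Dro^{-}<0$ and consequently $-\tan^{-1}(\Dro^{-})>0$; the hypothesis $\theta<-\tan^{-1}(\Dro^{-})$ together with forward peeling places $\theta$ in $[0,\pi/2)$.

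Second, I would compute $\psi(a^{-})$ by invoking Algorithm~\ref{algo:TruePeelingAngle}. This reduces to two sub-checks.
\textbf{(a)~The no-contact branch of Algorithm~\ref{algo:TruePeelingAngle} fails at $a^{-}$.} Expanding $\Delta\rho(x_1)=\rho(a^{-})-\tan(\theta)(x_1-a^{-})-\rho(x_1)$ around $a^{-}$ gives $\Delta\rho(x_1)=-\pr{\tan(\theta)+\Dro^{-}}(x_1-a^{-})+o(x_1-a^{-})$. The hypothesis $\theta<-\tan^{-1}(\Dro^{-})$ combined with $\theta\in[0,\pi/2)$ yields $\tan(\theta)+\Dro^{-}<0$, so $\Delta\rho(x_1)<0$ for $x_1$ slightly below $a^{-}$. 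Because $\mathrm{sgn}(\theta-\pi/2)<0$, the no-contact test ``$\Delta\rho(x_1)>0$ for all $x_1<a^{-}$'' fails, and the algorithm proceeds to the contact branch.
\textbf{(b)~The local-contact sub-branch applies at $a^{-}$.} Since $\Dro$ attains its minimum at $a^{-}$ and $\rho\in C^{2}$, we have $\ddot{\rho}(a^{-})=0$ and hence $k(a^{-})=0\le 0$ by~\eqref{eq:curvature}. Combined with $\Dro^{-}<0$ and $\mathrm{sgn}(\theta-\pi/2)=-1$, the test $\Dro(a^{-})\,\mathrm{sgn}(\theta-\pi/2)>0$ is satisfied. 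Algorithm~\ref{algo:TruePeelingAngle} therefore assigns $\psi(a^{-})=0$, since $\theta<\pi/2$.

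Combining these yields $F^{+}\ge\mathpzc{F}(\psi(a^{-}))=\mathpzc{F}(0)=(2w)^{1/2}$, and together with the a~priori upper bound $F^{+}\le(2w)^{1/2}$ this gives $F^{+}=(2w)^{1/2}$, as claimed. The main obstacle is step~(a): one must rigorously rule out the no-contact branch using nothing more than the hypothesis on $\theta$; the first-order expansion of $\Delta\rho$ accomplishes this, but care is needed if $a^{-}$ happens to coincide with the boundary of the fundamental period, in which case the $1$-periodicity of $\rho$ lets us relocate $a^{-}$ to an interior point before carrying out the expansion.
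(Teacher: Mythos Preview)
Your proof is correct and follows essentially the same route as the paper's: both take $a=a^{-}$ as the candidate, verify via Algorithm~\ref{algo:TruePeelingAngle} that the configuration involves local contact (using $\ddot\rho(a^{-})=0$ from the minimality of $\Dro$ at $a^{-}$, and $\Dro^{-}<0$), conclude $\psi(a^{-})=0$, and then invoke $\mathpzc{F}(0)=(2w)^{1/2}$ together with the a~priori bound $F^{+}\le(2w)^{1/2}$. The only cosmetic difference is in step~(a): the paper rules out the no-contact branch by a contradiction argument (taking $x_1\nearrow a^{-}$ in the inequality $\Delta\rho(x_1)>0$ to force $\Dro^{-}\ge -\tan\theta$), whereas you use the equivalent first-order Taylor expansion of $\Delta\rho$ to exhibit directly an $x_1<a^{-}$ with $\Delta\rho(x_1)<0$.
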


\begin{proof}

Consider the configuration in which the de-adhered length  $a=a^{-}$.
The length $a^{-}$ is defined in~\eqref{eq:aPlusMinus}.
Using Algorithm~\ref{algo:TruePeelingAngle} we  show that the true peeling angle in this configuration, $\psi(a^-)$, is naught.

We start by showing that the $a^{-}$ configuration involves contact. Let us assume that the $a^{-}$ configuration does not involve contact. It then follows from Algorithm~\ref{algo:TruePeelingAngle}, line 4 that $\Delta\rho(x_1)>0$ for all $x<a^{-}$. Recalling $\Delta\rho$'s definition this last implication can be written more explicitly as
\begin{linenomath}
\begin{equation}
    \frac{\rho(a)-\rho(x_1)}{a-x_1} > - \tan(\theta),
\label{eq:Drhoataminimum}
\end{equation}
\end{linenomath}
for all $x_1<a^-$. Taking the limit $x_1 \nearrow a^{-}$ in~\eqref{eq:Drhoataminimum} and noting that $\dot{\varrho}$ is a continuous function we get that
\begin{linenomath}
\begin{equation}
    \dot{\rho}(a^{-}) \ge - \tan(\theta).
\label{eq:rho_am}
\end{equation}
\end{linenomath}
Since $\Dro(a^{-})=:\Drom$,  the left hand side in~\eqref{eq:rho_am} simplifies to $\Drom$. It follows from our hypothesis that the global compatibility condition is violated that the right hand side of~\eqref{eq:rho_am} is greater than $\Drom$. Thus, we get a contradiction.  Hence, our assumption that the $a^{-}$ configuration does not involve contact is false.

Since the $a^{-}$ configuration involves contact we need to move to line 7 of Algorithm~\ref{algo:TruePeelingAngle} for determining the configuration's true peeling angle. 
Noting that we have assumed $\DDro$ to be a continuous function and, from~\eqref{eq:DRhoPlusMinus}, that $\Drom$ is $\Dro$'s minimum value we get that $\DDro(a^-)=0$. 
This last result together with~\eqref{eq:Dl} and~\eqref{eq:curvature} implies that $k(a^{-})=0$.
The function $\varrho$'s property that it is a surjective function with the range $[-1,1]$ implies that $\Drom$ is negative. 
As a consequence of these last two implications we need to move to line 8 from line 7 in the algorithm. 
Since in forward peeling $\theta<\pi/2$ we get from line 8 that the true peeling angle in the $a^{-}$ configuration is naught, i.e., $\psi(a^-)=0$. 

In~\S\ref{sec:PeelingWithContact} we discussed that the equilibrium force corresponding to a configuration with de-adhered length $a$ is $\mathpzc{F}(\psi(a))$. 
Thus, we the $a^{-}$ configuration's equilibrium force is $\mathpzc{F}(0)$, which simplifies to $(2w)^{1/2}$. 

Recall that $(2w)^{1/2}$ is the function $\mathpzc{F}$'s maximum value. This fact in conjunction with $F^+$'s definition~\eqref{eq:Fmp} and the final result from the previous paragraph imply that $F^{+}=(2w)^{1/2}$.
\end{proof}

\begin{theorem}
  During backward peeling when the global compatibility condition~\eqref{eq:GlobalComp} is violated, specifically when $\theta > \pi - \tan^{-1}\pr{\Drop}$, the peel-initiation force achieves its lower bound, which is $(4+2w)^{1/2}-2$.
  \label{proof:backward_lb}
  \label{proof:upperbound}
\end{theorem}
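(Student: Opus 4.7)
The plan is to mirror the strategy used for Theorem~\ref{proof:forward_ub}, with $a^{-}$ replaced by $a^{+}$ and with the sign conventions adjusted for backward peeling. I would consider the configuration in which the de-adhered length equals $a^{+}$, and aim to show that the true peeling angle in that configuration is $\pi$, whence the equilibrium force $\mathpzc{F}(\psi(a^{+}))$ evaluates to $\mathpzc{F}(\pi)=(4+2w)^{1/2}-2$, which was identified earlier as the minimum of $\mathpzc{F}$.

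First, I would show that the $a^{+}$ configuration involves contact. Following the forward-peeling template, I would assume the contrary and invoke line~4 of Algorithm~\ref{algo:TruePeelingAngle}; since here $\theta>\pi/2$, the sign condition $\mathrm{sgn}((\theta-\pi/2)(x_{1}-a^{+}))>0$ selects $x_{1}>a^{+}$. The inequality $\Delta\rho(x_{1})>0$ on that half-line, after taking the one-sided limit $x_{1}\searrow a^{+}$ and using the continuity of $\dot\rho$, would yield $\Drop\le -\tan(\theta)$. But the hypothesis $\theta>\pi-\tan^{-1}(\Drop)$ is equivalent (since $\theta\in(\pi/2,\pi)$ and $\Drop>0$, so both sides of the corresponding inequality lie in the monotonicity interval of $\tan$) to $\Drop>-\tan(\theta)$, giving a contradiction. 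Therefore the $a^{+}$ configuration must involve contact, and I would continue through Algorithm~\ref{algo:TruePeelingAngle} from line~6 onward.

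Second, I would verify that the algorithm routes to its local-contact branch at line~8. Because $\Drop$ is a maximum of the $C^{2}$ function $\dot\rho$, we have $\ddot\rho(a^{+})=0$, and hence $k(a^{+})=0$ by~\eqref{eq:curvature}. Since $\varrho$ is a non-constant, zero-mean, surjective, $1$-periodic function, $\dot\rho$ must attain strictly positive values, so $\Drop>0$; combined with $\mathrm{sgn}(\theta-\pi/2)=+1$ in backward peeling, both conditions in line~7 hold. Line~8 then delivers $\psi(a^{+})=\pi$. Applying the universal relation from \S\ref{sec:PeelingWithContact} that the equilibrium force at a configuration with de-adhered length $a$ equals $\mathpzc{F}(\psi(a))$, and evaluating $\mathpzc{F}(\pi)$ from~\eqref{eq:F_func} yields
\begin{equation*}
\mathpzc{F}(\pi)=\cos(\pi)-1+\bigl((\cos(\pi)-1)^{2}+2w\bigr)^{1/2}=(4+2w)^{1/2}-2.
\end{equation*}
Since this is the minimum value of $\mathpzc{F}$, and $F^{-}=\inf\{\mathpzc{F}(\psi(a))\mid a\in\mathcal{D}\}$ by~\eqref{eq:Fmp}, we conclude $F^{-}=(4+2w)^{1/2}-2$, the claimed lower bound.

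The main obstacle, as in the forward case, is the first step: correctly propagating the sign conventions of backward peeling (in particular the reversal of the half-line over which $\Delta\rho>0$ is required, and the corresponding direction of the one-sided limit used to extract a condition on $\Drop$) through Algorithm~\ref{algo:TruePeelingAngle} and the hypothesis $\theta>\pi-\tan^{-1}(\Drop)$. Once the contact configuration is identified, the rest is a routine traversal of the algorithm and a direct evaluation of $\mathpzc{F}$ at $\pi$.
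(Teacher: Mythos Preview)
Your proposal is correct and follows exactly the approach the paper itself indicates: the paper omits the proof, stating only that it is ``quite similar'' to that of Theorem~\ref{proof:forward_ub} except that one focuses on the configuration with de-adhered length $a^{+}$ instead of $a^{-}$. Your mirroring of the contact-by-contradiction step (with the half-line $x_{1}>a^{+}$ and the limit $x_{1}\searrow a^{+}$), the verification that Algorithm~\ref{algo:TruePeelingAngle} routes to line~8 via $k(a^{+})=0$ and $\Drop>0$, and the evaluation $\mathpzc{F}(\pi)=(4+2w)^{1/2}-2$ are precisely the adaptations required.
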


We omit our proof for Theorem~\ref{proof:backward_lb}. Since it is quite similar to the one we provided for Theorem~\ref{proof:forward_ub}, except that in it we focus on the configuration with de-adhered length $a^{+}$ instead of the configuration with de-adhered length $a^{-}$.

  \begin{theorem}
  During backward peeling when the global compatibility condition is violated, specifically when
\begin{linenomath}
\begin{subequations}
\begin{align}
  \theta &> \pi - \tan^{-1}\pr{\Drop}
  \label{ineq:hypo}
  \\
  \intertext{the equilibrium force is always less than or equal to $\mathcal{F}\pr{\psi_{\rm lb}^{-}}$, where }
    \psi_{\rm lb}^{-}
    &:= \pi + \tan^{-1}\pr{\Drom} - \tan^{-1}\pr{\Drop}.
    \label{eq:psi_lb}
\end{align}
\end{subequations}
\end{linenomath}
  \end{theorem}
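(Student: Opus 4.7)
The plan is to reduce the bound on the equilibrium force to a bound on the true peeling angle, and then verify that bound by a case analysis based on Algorithm~\ref{algo:TruePeelingAngle}. Concretely, since any equilibrium force corresponding to a configuration with de-adhered length $a$ equals $\mathpzc{F}(\psi(a))$ (as noted in~\S\ref{sec:PeelingWithContact}), and since $\mathpzc{F}$ is strictly decreasing on $[0,\pi]$ (cf.\ Figure~\ref{fig:F-theta}), the claim $F\le \mathpzc{F}\pr{\psi_{\rm lb}^{-}}$ is equivalent to the geometric inequality
\begin{equation*}
  \psi(a) \;\ge\; \psi_{\rm lb}^{-}
  \;=\; \pi + \tan^{-1}\pr{\Drom} - \tan^{-1}\pr{\Drop}
  \qquad\text{for every admissible } a.
\end{equation*}
So the whole proof boils down to establishing this lower bound on $\psi(a)$ uniformly in $a$, under the hypothesis $\theta > \pi - \tan^{-1}\pr{\Drop}$.

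First, I would invoke Algorithm~\ref{algo:TruePeelingAngle} to split into the three exhaustive cases it produces for $\psi(a)$: (i) no-contact configurations, (ii) local-contact configurations, and (iii) non-local-contact configurations. In case (i), Algorithm~\ref{algo:TruePeelingAngle} gives $\psi(a)=\theta+\tan^{-1}\pr{\Dro(a)}$; combining the hypothesis $\theta>\pi-\tan^{-1}\pr{\Drop}$ with $\Dro(a)\ge\Drom$ (which follows from the definition of $\Drom$ in~\eqref{eq:DRhoPlusMinus}--\eqref{eq:aPlusMinus}) immediately yields $\psi(a)>\psi_{\rm lb}^{-}$. In case (ii), since we are in backward peeling ($\theta\ge\pi/2$), Algorithm~\ref{algo:TruePeelingAngle} returns $\psi(a)=\pi$, and $\pi\ge\psi_{\rm lb}^{-}$ because $\Drom\le\Drop$.

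The main work is in case (iii), the non-local contact branch, where Algorithm~\ref{algo:TruePeelingAngle} gives $\psi(a)=\pi+\tan^{-1}\pr{\Dro(a)}-\tan^{-1}\pr{\Dro(c)}$ for a contact-front coordinate $c$ determined by the chord condition $(\rho(a)-\rho(c))/(a-c)=\Dro(c)$. The bound I want then reads
\begin{equation*}
  \tan^{-1}\pr{\Dro(a)}-\tan^{-1}\pr{\Dro(c)} \;\ge\; \tan^{-1}\pr{\Drom}-\tan^{-1}\pr{\Drop},
\end{equation*}
which follows term-by-term from the extremal characterization of $\Drom$ and $\Drop$ in~\eqref{eq:DRhoPlusMinus}--\eqref{eq:aPlusMinus}: $\Dro(a)\ge\Drom$ and $\Dro(c)\le\Drop$, together with the monotonicity of $\tan^{-1}$.

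Finally, I would collect the three cases to conclude $\psi(a)\ge\psi_{\rm lb}^{-}$ for every de-adhered length $a$, and then apply the monotonicity of $\mathpzc{F}$ together with $F=\mathpzc{F}(\psi(a))$ to close the proof. The only non-routine point is really bookkeeping case (iii): one has to use Algorithm~\ref{algo:TruePeelingAngle}'s prescription of $c$ only through the inequality $\Dro(c)\le\Drop$, so no property of the specific contact front is needed beyond the universal bounds $\Drom\le\Dro\le\Drop$ on the surface slope. This is what makes the estimate $\mathpzc{F}(\psi_{\rm lb}^{-})$ an \emph{a priori}, profile-independent upper envelope for the backward-peeling equilibrium force in the globally incompatible regime.
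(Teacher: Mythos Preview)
Your proposal is correct and follows essentially the same route as the paper's own proof: reduce the force bound to the angle bound $\psi(a)\ge\psi_{\rm lb}^{-}$ via the monotonicity of $\mathpzc{F}$, then verify that bound case-by-case through Algorithm~\ref{algo:TruePeelingAngle} (no contact, local contact, non-local contact) using only $\Drom\le\Dro\le\Drop$ and the monotonicity of $\tan^{-1}$. Your closing remark that the specific contact-front condition on $c$ is never used beyond $\Dro(c)\le\Drop$ is a nice clarification that the paper leaves implicit.
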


  \begin{proof}

  In \S\ref{sec:PeelingWithContact}  we discovered that during backward peeling when the global compatibility condition is violated  there  will exist some configurations that involve contact and others that do not (see Figures~\ref{fig:SinFContact}e and~\ref{fig:ZigFContact}e.) 

  When a configuration involves contact the true peeling angle is given by~\eqref{eq:psi_NC}.
  Since, by hypothesis, $\theta > \pi - \tan^{-1}\pr{\Drop}$ and, since  $\Dro(a)\ge \dot{\rho}^{-}$ by definition, $\tan^{-1}\pr{\Dro(a)} \ge  \tan^{-1}\pr{\Drom}$ it follows from~\eqref{eq:psi_NC} that during backward peeling when the configuration does not involve contact the true peeling angle is greater than   $\pi + \tan^{-1}\pr{\Drom} - \tan^{-1}\pr{\Drop}$, which is nothing but $\psi_{\rm lb}^{-}$.  

When a configuration involves contact then it follows from Algorithm~\ref{algo:TruePeelingAngle} that the true peeling angle is either equal to  $\pi$ (local contact) or to $\pi + \tan^{-1}\pr{\Dro(a)} - \tan^{-1}\pr{\Dro(c)}$ (non-local contact), where recall that  $a$ and $c$ are the abscissae of the peeling and contact fronts, respectively. Since we have assumed $\rho$ to be a non-constant function it follows that $\Drop>\Drom$ and hence that $\pi> \psi_{\rm lb}^{-}$.  It follows from the definitions of $\Drop$, $\Drom$, and the monotonicity of $\tan^{-1}$ that  $\pi+\tan^{-1}\pr{\Dro(a)} - \tan^{-1}\pr{\Dro(c)}\ge \psi_{\rm lb}^{-}$.
Thus, in the case of contact the true peeling angle is greater than or equal to $\psi_{\rm lb}^{-}$.

The deductions in the last two paragraphs can be summarized by saying that when~\eqref{ineq:hypo} holds the true peeling angle is always greater than or equal to $\psi_{\rm lb}^{-}$. In~\S\ref{sec:PeelingWithContact} we discussed that the equilibrium force corresponding to the true peeling angle $\psi(a)$ is always $\mathpzc{F}(\psi(a))$, irrespective of whether or not the configuration involves contact. Since $\mathpzc{F}$ is a monotonically decreasing function the last two statements imply that under~\eqref{ineq:hypo} the equilibrium peeling force is always less than  or equal to $\mathcal{F}\pr{\psi_{\rm lb}^{-}}=:F_{\rm ub}^{+}$.
\end{proof}

  For illustrating Theorem~\ref{proof:upperbound}, we mark  $\psi_{\rm lb}^-$ and ${F}_{\rm ub}^{+}$  for the case of backward peeling under~\eqref{ineq:hypo} on a simple wavy surface in Figures~\ref{fig:SinFContact}d and e, respectively, and on a complicated wavy surface in Figures~\ref{fig:ZigFContact}d and e, respectively.

  \section{Angle-independent optimal peel-off force}
  \label{sec:Optimal}

  In this section, we analyze the asymptotic value of the peel-off force $F^{+}$ when the substrate's aspect ratio $\alpha$, or its root-mean-square (RMS)\footnotemark~roughness,  becomes large. 
  This is equivalent to, e.g., the case where the substrate's surface's periodicity $\lambda$ becomes vanishingly small in comparison to its amplitude $A$. 

  \footnotetext{The RMS roughness of the substrate's surface is equal to $\alpha \pr{\int_0^1 \varrho(x_1)^2 \, dx_1}^{1/2}$.}

It follows from Algorithm~\ref{algo:TruePeelingAngle} that as $\alpha$ becomes large all configurations become contact configurations, irrespective of $a$ or $\varrho$, if $\theta$ is different from $\pi/2$; and if  $\theta=\pi/2$  none of the configurations involve contact. 

When $\theta = \pi/2 $, since none of the configurations involve contact, we can use the results given in \S\ref{sec:PeelingNoContact}. 
Specifically, using~\eqref{eq:psi_NC_pm} we get that  as $\alpha$ becomes large $\psi^{-}$ becomes vanishingly small, since in that limit $\Drom$ tends to negative infinity. 
Taking the limit $\psi^{-}\to 0$ in~\eqref{eq:Fmp_NC} we get the result that $F^{+}$ approaches its upper bound $(2 w)^{1/2}$ as $\alpha$ becomes large. 
This result is shown illustrated in Figure~\ref{fig:LargeA}. 

When $\theta<\pi/2$, the condition $\theta<-\tan^{-1}(\Drom)$ will inevitably get violated for a large enough $\alpha$. Thus, from Theorem~\ref{proof:forward_ub} we get that  $F^{+}$ will eventually equal its upper bound of $(2 w)^{1/2}$ as $\alpha$ becomes large. 

When $\theta > \pi/2 $, let us choose an $a$ for which $\dot{\varrho}(a)$ is negative. 
Now consider a sequence of configurations that all have that same $a$, $\varrho$, and $\theta>\pi/2$ but increasingly larger values of $\alpha$. 
It follows from Algorithm~\ref{algo:TruePeelingAngle} that when $\alpha$ becomes large enough all subsequent configurations will be of the non-local-contact type, and that the true peeling angle in all of them can be computed as $ \pi + \tan^{-1}\pr{\Dro(a)} - \tan^{-1}\pr{\Dro(c)}$. 
Recall that $c$ is the contact front's abscissa.
We would expect $c$ to vary between the configurations. 
However, it can be shown, again using Algorithm~\ref{algo:TruePeelingAngle}, that once the configurations become of the non-local-contact type the $c$ in them also remains fixed, and furthermore that the value of $\Dot{\varrho}$ at that $c$ is positive. 
Since $\dot{\rho}=\alpha \dot{\varrho}$ and $\dot{\varrho}$ is negative at $a$ and positive at $c$ as $\alpha$ becomes large $\tan^{-1}\pr{\Dro\pr{a}}$ and $\tan^{-1}\pr{\Dro\pr{c}}$ tend to $\mp \pi/2$, respectively, and, consequently, the true peeling angles tend to naught.
Recall from the discussion in \S\ref{sec:PeelingWithContact} that irrespective of whether or not a configuration involves contact, the equilibrium force in that configuration can always be computed as $\mathcal{F}(\psi(a))$, where $\psi(a)$ is the configuration's true peeling angle.
Thus, as $\alpha$ becomes large the equilibrium force in the sequence converges to $\mathcal{F}(0)=(2 w)^{1/2}$.
From~\eqref{eq:Fmp}  we know that the peel-off force for each geometry corresponding to a configuration in the sequence, namely that defined by  the profile $\alpha \varrho$ and the peeling angle $\theta$, is greater than or equal to that configuration's equilibrium force.
As we noted in the discussion immediately ensuing~\eqref{eq:Fmp}  the peel-off force, irrespective of  profile or peeling-angle, is bounded above by $(2 w)^{1/2}$.
It follows from the last three statements that for any fixed $\varrho$ and $\theta>\pi/2$ as  $\alpha$ becomes large the  peel-off force tends to its upper bound $(2w)^{1/2}$.

 %
\begin{figure}[t!]
   \centering
   \includegraphics[width=\textwidth]{./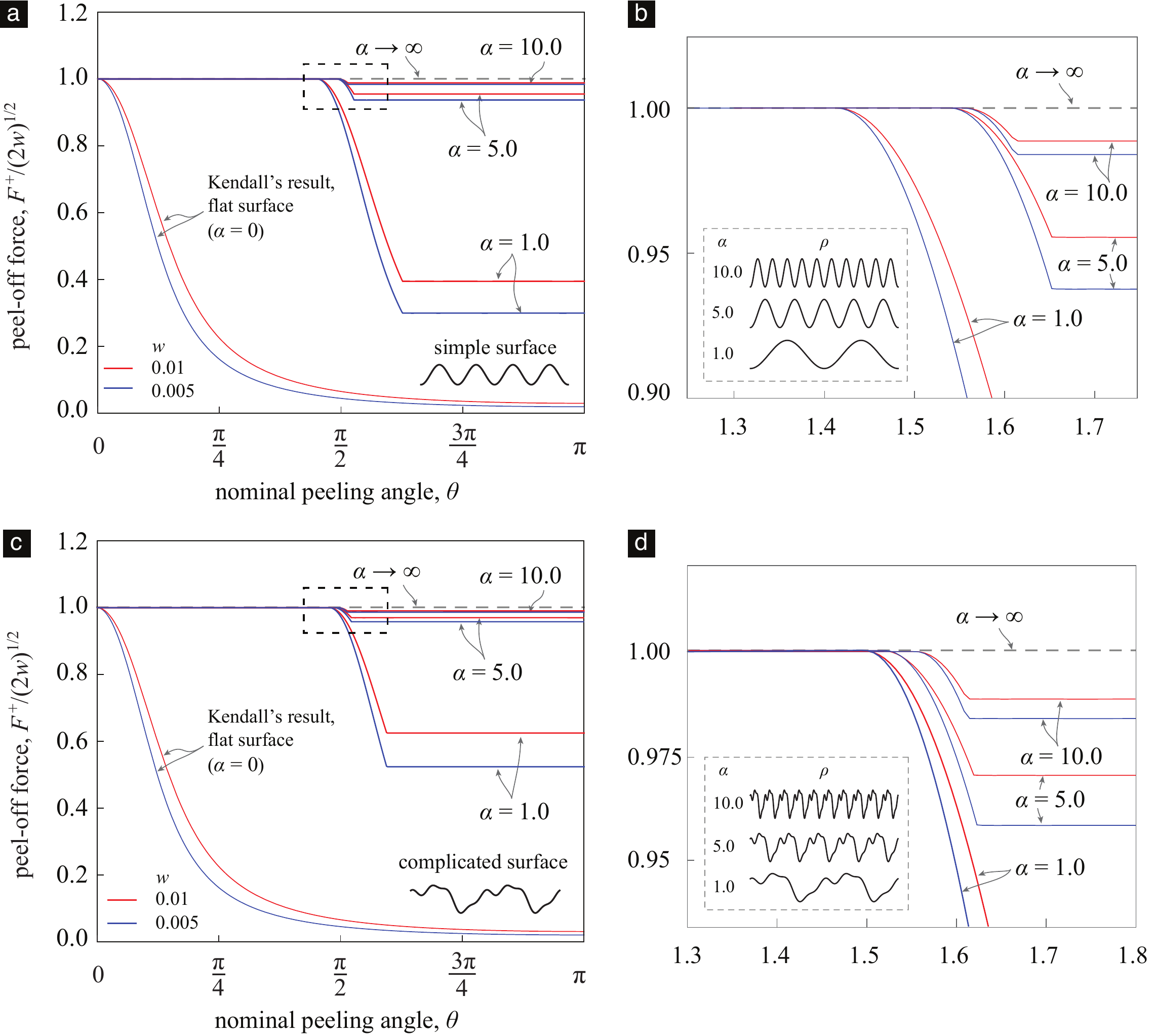}
   \caption{The plot of peel-off force $F^{+}$ as a function of nominal peeling angle $\theta$ for the peeling on (a) simple and (b) complicated  surfaces for a series of increasing surface's aspect ratio $\alpha$. The surface profiles, $\varrho$, are the same as those considered in Figures~\ref{fig:FNonContact}c and f, respectively.
   }
   \label{fig:LargeA}
 \end{figure}

  In summary, from the previous three paragraphs we have the important conclusion that as the surface roughness, $\alpha$, is increased the peel-off force $F^{+}$ tends to its upper bound $(2w)^{1/2}$.
  This happens independent of the surface's shape, $\varrho$, and  the peeling angle, $\theta$. We call this phenomenon angle-independent optimal adhesion.
  This phenomenon is quite interesting considering that adhesion on a flat surface is highly dependent on the peeling angle, and the optimal adhesion is only attained at a single peeling angle, namely for  $\theta=0$.

  We numerically computed the peel-off force $F^+$ for  the simple and complicated wavy surface shapes, which we first considered in~\S\ref{sec:PeelingNoContact}, for various peeling angles. For each surface shape and peeling angle we calculated $F^{+}$ for a sequence of geometries of increasing $\alpha$ values. The results of our calculations are shown in Figure~\ref{fig:LargeA}. As can be seen, at small $\alpha$ values, for instance $\alpha=1.0$ and $5.0$, the peel-off force ${F}^{+}$ depends strongly on $\theta$. However, as $\alpha$ increases the dependence of ${F}^{+}$ on $\theta$ becomes weak, such that at large $\alpha$ values, e.g.  $\alpha= 10.0$, the calculated peel-off forces appear to be essentially independent of the peeling angle. Finally, irrespective of the peeling angle or the surface shape, the calculated peel-off force values appears to approach $(2w)^{1/2}$ from below as $\alpha$ increases.

\section{Concluding remarks}
\label{sec:discussion}

We conclude this paper by briefly commenting on the effect  of bending strain energy on the peel-off force. 

Peng and Chen~\cite{peng2015peeling} investigated the peeling of an elastic film  on a sinusoidal surface. Their sinusoidal surface is the same our simple wavy surface that we first considered in \S\ref{sec:PeelingNoContact}, and is shown illustrated in Figure~\ref{fig:FNonContact}a. In their analysis they took into account the film's bending energy, where as in our model we only consider the  strain energy due to tension and ignore the strain energy due to bending.
They do not consider contact between the thin-film and the substrate in their analysis. Therefore, some insight into the effect of the bending strain energy can be garnered by comparing their analysis to the results that we present in  \S\ref{sec:PeelingNoContact} (peeling with no contact) when they are particularized to the case of simple wavy surface.

 In Peng and Chen's model the equilibrium peeling force, $F$, is related to the de-adhered length, $a$, as
 \begin{linenomath}
\begin{equation}
  F(a) = \cos(\psi({a})) - 1 + \left(\left( \cos(\psi({a})) - 1 \right)^2 + 2{w} - \frac{2\pi^4 \alpha^2\bar{h}^2 (1+\cos(4\pi{a})) }{3 (1+4\pi^2 \alpha^2 \sin^2(2\pi {a}))^{1/2}} \right)^{1/2},
  \label{eq:PengsEq}
\end{equation}
\end{linenomath}
where $\bar{h} = h/\lambda$. On particularizing the results in \S\ref{sec:PeelingNoContact} to the case of simple wavy surface our model predicts the same relation between the equilibrium force and de-adhered length as~\eqref{eq:PengsEq} except that in it there are no terms containing $\bar{h}$. That is, in our model's prediction the third terms from the left within the parenthesis of~\eqref{eq:PengsEq}, containing $\bar{h}$, is absent.
Since the third term scales with $\bar{h}$ as $\bar{h}^{2}$,  Peng and Chen's results converge to our results as $\bar{h} \to 0$. This fact can also be noted from Figure~\ref{fig:PengsEq}, in which we compare the  predictions from Peng and Chen's model with those from our model for the case of no-contact and peeling on a simple wavy surface for various $\bar{h}$  (in subfigure a) and $\theta$ (in subfigure (b)) values.
Even though the above comparison is only for a particular substrate profile, namely the simple wavy surface, we believe that it is reasonable to expect that the effect of the bending strain energy on the equilibrium forces will decrease with decreasing film thicknesses.

Another interesting feature of the above above comaparision that is revealed by the Figure~\ref{fig:PengsEq} and merits further investigation is that ignoring the bending strain energy seems to have no affect on the peel-off force. The peel-off force $F^{+}$ which we might recall is the supremum of all equilibrium forces on a given substrate profile and peeling angle.  Specifically, as per Figure~\ref{fig:PengsEq} the surpremum of the equilibrium forces predicted by our's as well as Peng and Chen's model appear to be the same. This numerical evidence prompts us conjecture that the bending strain energy, or the film thickness, will have no effect on a film's peel-off force irrespective of the substrate profile or peeling angle.

\begin{figure}[t!]
  \centering
  \includegraphics[width=\textwidth]{./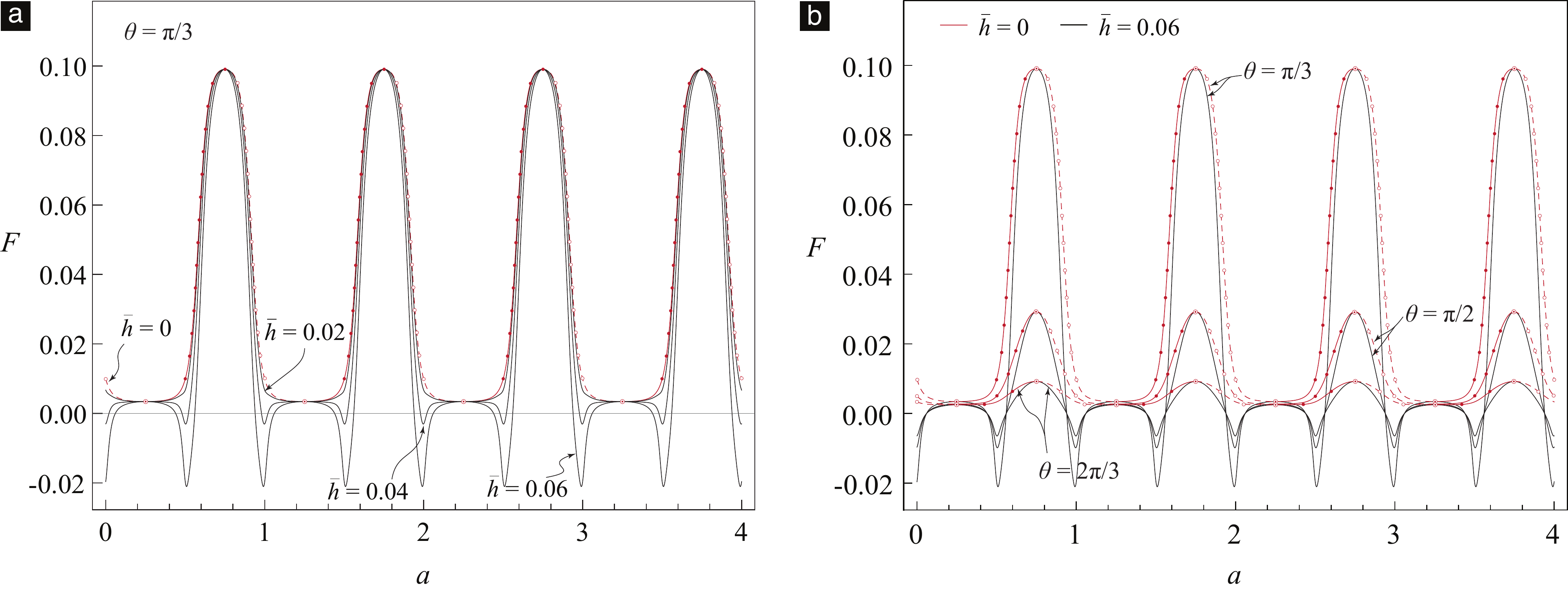}
  \caption{(a) The ${F}$--${a}$ plot for different values of thin film thicknesses $\bar{h}$ that considers the thin elastic film's bending energy for peeling on a sinusoidal surface at $\theta = \pi/3$. (b) The ${F}$--${a}$ plot for different values of nominal peeling angle $\theta$ for peeling a thin film with thickness $\bar{h} = 0$ (red curves) and $\bar{h} = 0.06$ (black curves) on a sinusoidal surface. For the sinusoidal surface in (a) and (b), we take $A = 0.25$, $\lambda = 1.0$, $\varrho(x_1) = -\cos(2\pi x_1)$, and ${w} = 0.005$. For $\bar{h}=0$, the stable, neutral, and unstable equilibrium state are marked with a solid dot, circled dot, and circle, respectively,  from the stability analysis of our model.}
  \label{fig:PengsEq}
\end{figure}

\section*{Acknowledgment}
The authors gratefully acknowledge support from the Office of Naval Research (Dr. Timothy Bentley) [Panther Program, grant number N000141812494] and the National Science Foundation [Mechanics of Materials and Structures Program, grant number 1562656]. Weilin Deng is partially supported by a graduate fellowship from the China Scholarship Council.

\appendix
\numberwithin{equation}{section}
\renewcommand{\theequation}{A.\arabic{equation}}
\renewcommand{\thefigure}{\arabic{figure}}
\section{Derivation of \texorpdfstring{$\delta u$}{} for peeling involving contact}
\label{sec:Appen_du}

\begin{figure}[t!]
  \centering
  \includegraphics[width=\textwidth]{./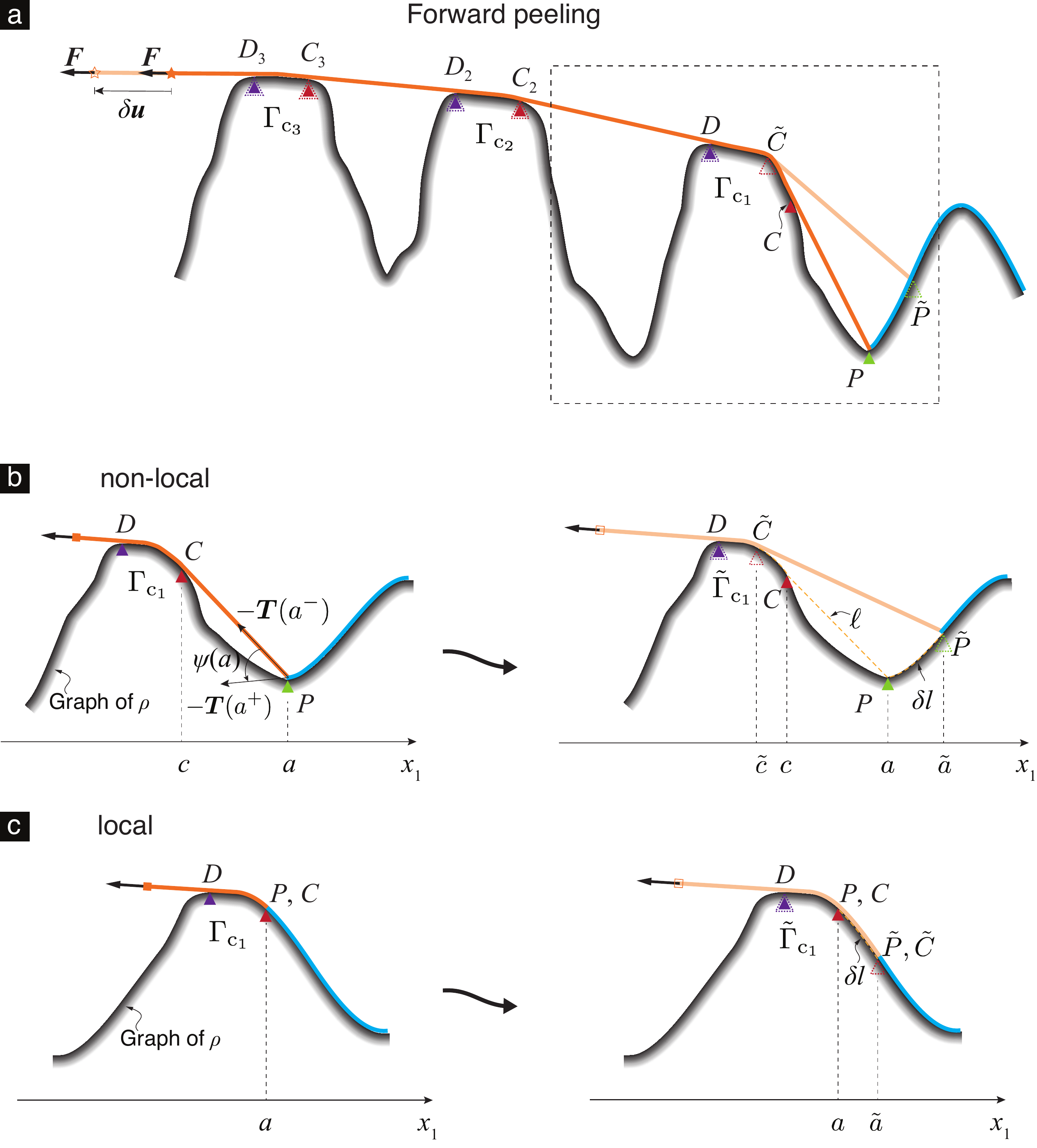}
  \caption{(a) The schematic of forward peeling involving contact that considers the peeling front $P$ moves to $\tilde{P}$ after an infinitesimal perturbation. The contact front accordingly changes from $C$ to $\tilde{C}$. (b) and (c) show non-local and local type contact, respectively.}
  \label{fig:ForwardPeeling}
\end{figure}
\begin{figure}[t!]
  \centering
  \includegraphics[width=\textwidth]{./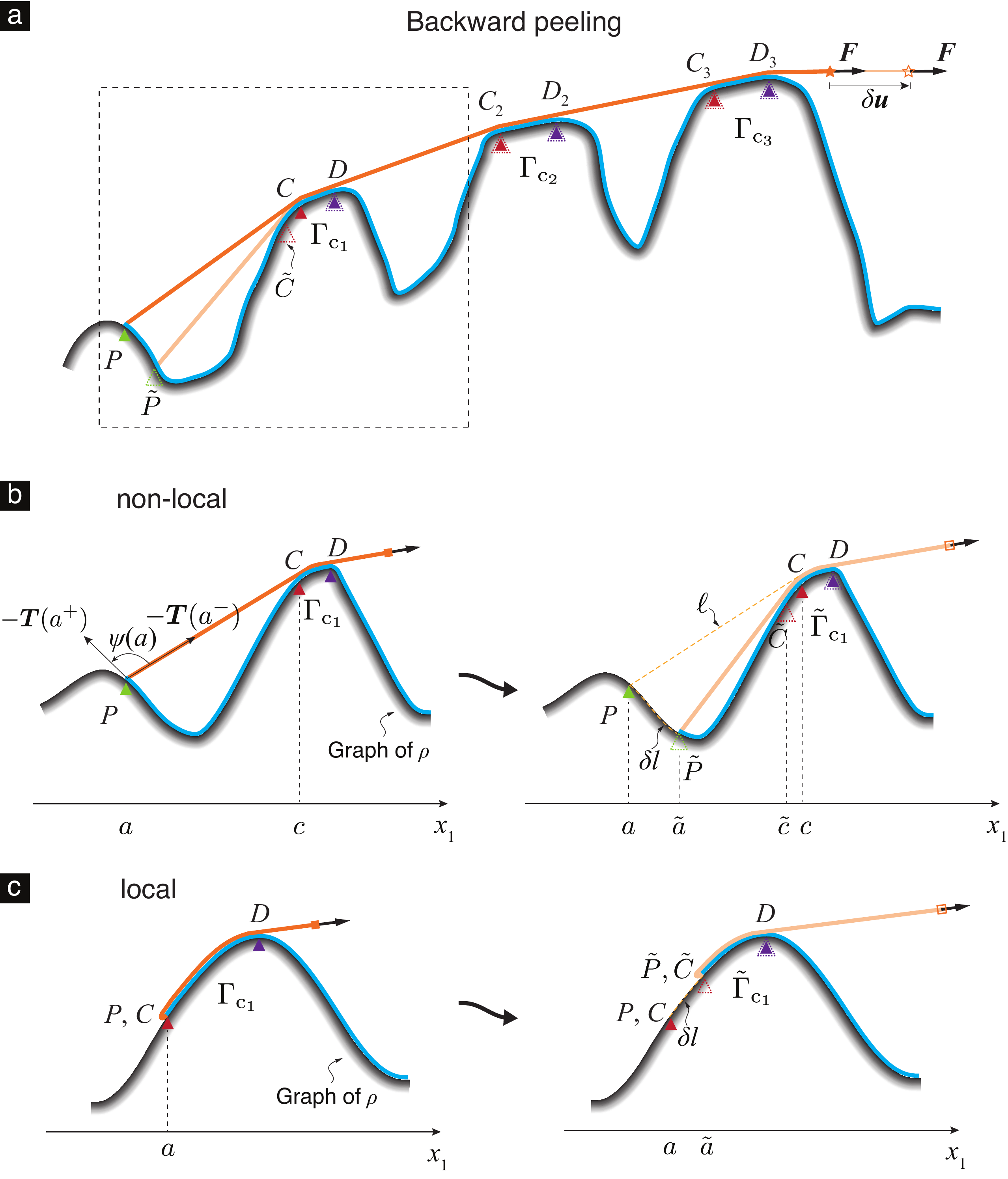}
  \caption{(a) The schematic of backward peeling involving contact that considers the peeling front $P$ moves to $\tilde{P}$ after an infinitesimal perturbation. The contact front accordingly changes from $C$ to $\tilde{C}$. (b) and (c) show non-local and local type contact, respectively.}
  \label{fig:BackwardPeeling}
\end{figure}

Configurations that involve contact during a generic forward peeling experiment are shown in Figure~\ref{fig:ForwardPeeling}, and similar configurations from a backward peeling experiment are shown in Figure~\ref{fig:BackwardPeeling}. For concreteness we focus on the forward peeling experiment, however, our remarks apply to the backward peeling experiment as well. The  peeled part and the remainder of the adhered part of the thin film in the configuration $\conf$ are shown, respectively, in dark orange and blue in Figure~\ref{fig:ForwardPeeling}a. We perturb $\conf$ slightly to obtain the nearby configuration $\confp$. The peeled part of the thin film in  $\confp$ is shown in light orange in  Figure~\ref{fig:ForwardPeeling}a. Recall that $P$ and $C$ denote the peeling and contact fronts  in $\conf$. We denote the peeling and contact fronts in $\confp$ as $\tilde{P}$ and $\tilde{C}$.

The quantity that we aim to compute, namely, $\delta u$, is related to the thin film's kinematics. The thin film's kinematics take place in $\mathcal{E}$. However, since the thin film's kinematics do not vary in  the $\physe_3$ direction, $\delta u$ can be computed  by only analyzing the kinematics that take place in, say, $\mathcal{E}$'s $x_1$-$x_2$ plane. Therefore, we will be focusing all our analysis related to computing $\delta u$  only to $\mathcal{E}$'s  $x_1$-$x_2$ plane. In order to avoid the introduction of more symbols,  we will use the same symbols that we introduced to refer to subsets of  $\mathcal{E}$ to refer to the quantities that result from the projection of those subsets into the  $x_1$-$x_2$ plane. For instance. The peeling and contact fronts, $P$ and $C$, are  line segments in $\mathcal{E}$. However, we will be denoting their projections  into the $x_1$--$x_2$ plane, which are points in  $\mathcal{E}$, also as $P$ and $C$, and refer to them as the peeling and contact point, respectively. Similarly, we denote the   peeling and contact points in $\confp$ as $\tilde{P}$ and $\tilde{C}$.

 We consider a general contact scenario in which the contact region $\Gamma_c\in \mathcal{D}$ consists of several disjoint contact patches, $\Gamma_{c_1}$,$\Gamma_{c_2}$, etc. In Figure~\ref{fig:ForwardPeeling}a we mark the  regions on the substrate which mate with  these contact patches in $\conf$ as $\u{x}\pr{\Gamma_{\rm c_1}}$, $\u{x}\pr{\Gamma_{\rm c_2}}$,\ldots, $\u{x}\pr{\Gamma_{\rm c_n}}$. We call $\u{x}\pr{\Gamma_{\rm c_1}}$ the first contact region, and $\u{x}\pr{\Gamma_{\rm c_n}}$  the last contact region. By definition, the point on the substrate where the first contact region begins is the contact point $C$. We mark the  point where the first contact region ends as $D$. Similarly, we mark the points where the $i^{\rm th}$, $i=2,\ldots, n $, contact region begins and ends as $C_i$ and $D_i$, respectively. We denote the location of the thin film's terminal end in $\conf$ as $X$.

 As we perturb the thin film's configuration from $\conf$ to  $\confp$, the peeling front moves from $P$ to $\tilde{P}$, the contact front  moves from $C$ to $\tilde{C}$, and the terminal end from $X$ to $\tilde{X}$. Interestingly, however, none of the other features that define the thin film's geometry move  during this perturbation. Specifically, in $\confp$ the first contact region still ends at  $D$, and all the other contact regions  still begin and end at the same points that they did in $\conf$.

The length of the peeled part of the  film in $\conf$ can be computed as the sum of the lengths of the line segment $\widebar{CP}$, the arc $\arc{CD}$, the line segment $\widebar{DC_{2}}$, the arcs $\arc{C_iD_i}$, $i=2,\ldots, n$, and the line segments $\widebar{D_iC_{i+1}}$, $i=2,\ldots, n-1$, and, finally, the line segment $\widebar{D_nX}$. Based on the discussion in the previous paragraph, the length of the peeled part  in $\confp$ is equal to the sum of the lengths of the line segment $\widebar{\t{P}\t{C}}$, the arc $\arc{\t{C} D}$, and, as before, the line segment $\widebar{D C_{2}}$, the arcs $\arc{C_i D_i}$, and the line segments $\widebar{D_iC_{i+1}}$, and, finally, the line segment $\overline{D_n \tilde{X}}$.  The difference in the length of the peeled part between  $\conf$ and $\confp$, therefore, is
\begin{linenomath}
\begin{equation}
\overline{D_n \tilde{X}}-\overline{D_n X}+
\arc{\t{C}D}-\arc{CD}+
\overline{\t{P}\t{C}}-\overline{PC}.
\notag
\end{equation}
\end{linenomath}
The length difference can, alternately, also be  computed as $\delta l(1+\varepsilon)$. Equating these two expressions for the length difference and noting that  $\overline{D_n \tilde{X}}-
\overline{D_n X}$ is in fact  $\delta u$ we get that
\begin{linenomath}
\begin{equation}
    \delta u
    =\delta l(1+\varepsilon)-\pr{\arc{\t{C}D}-\arc{CD}}
    -\overline{\t{P}\t{C}}
    +\overline{PC}.
\label{eq:LengthDifference}
\end{equation}
\end{linenomath}

The term $\arc{\t{C}D}-\arc{CD}$ in~\eqref{eq:LengthDifference} can be computed as
\begin{linenomath}
\begin{equation}
 \arc{\t{C}D}-\arc{CD}
  = \pm \int^{\tilde{c}}_c\pr{1+\Dro(x_1)^2}^{1/2} \, dx_1 , \label{eq:arcCprime-C}
\end{equation}
\end{linenomath}
where the plus sign is for the case of forward peeling, while the minus sign is for the case of backward peeling.

The points $P$, $C$, $\tilde{P}$, and $\tilde{C}$ are shown marked in, e.g.,  Figure~\ref{fig:ForwardPeeling}a. From the definitions of peeling and contact fronts it follows that the coordinates of the points $P$ and $C$ are, respectively,  $(a,\rho(a))$ and $(c,\rho(c))$. By denoting the  abscissa of  the points $\tilde{P}$ and $\tilde{C}$ as $\tilde{a}$ and $\tilde{c}$ it follows for similar reasons that these points' coordinates are $(\tilde{a},\rho(\tilde{a}))$ and $(\tilde{c},\rho(\tilde{c})$, respectively. The line segments $\widebar{PC}$ and $\widebar{\t{P}\t{C}}$ are tangent to the graph of $\rho$ at $C$ and $\t{C}$, respectively. (These aspects of the film's geometry are especially clear in Figure~\ref{fig:ForwardPeeling}b.) Using this information it can be shown that
  \begin{linenomath}
  \begin{subequations}
  \begin{align}
    \rho\pr{a}-\rho\pr{c}
    &=   \Dro\pr{c}(a-c), \label{eq:TangentAtC1} \\
    \rho\pr{\tilde{a}}-\rho\pr{\tilde{c}}
    &=  \Dro\pr{\tilde{c}}(\tilde{a}-\tilde{c}). \label{eq:TangentAtCprime1}
  \end{align}
  \label{eq:TangentAtCCprime1}
  \end{subequations}
  \end{linenomath}
Using the  knowledge of $P$, $C$, $\t{P}$, and $\t{C}$'s coordinates, and~\eqref{eq:TangentAtCCprime1} it can be shown that
\begin{linenomath}
\begin{align}
    \overline{PC} &= \pr{1+\Dro(c)^2}^{1/2}\lvert c-a\rvert,
    \label{eq:lineC-P}\\
  \overline{\t{P}\t{C}} &= \pr{1+\Dro(\t{c})^2}^{1/2}\lvert \t{c}-\t{a}\rvert.
  \label{eq:linetC-tP}
\end{align}
\end{linenomath}

Substituting $\arc{\t{C}D}-\arc{CD}$, $\overline{PC}$, and $\overline{\t{P}\t{C}}$ in~\eqref{eq:LengthDifference} with the right hand sides of~\eqref{eq:arcCprime-C},~\eqref{eq:lineC-P}, and~\eqref{eq:linetC-tP},  respectively; then writing $\t{a}$ as $a+\delta a$, and $c$ as the right hand side of~\eqref{eq:DeltacOnDeltaa} in the resulting equation;  and then, finally, expanding the resulting equation in series of $\delta a$, we get that as $\delta a \to 0$
\begin{linenomath}
\begin{equation}
  \begin{aligned}
    \delta {u} =
    & \left(1+\varepsilon - \cos(\psi(a)) \right) \dot{l}({a};\rho) \delta{a} \\
    & + \frac{1}{2}\left( (1+\varepsilon)  \ddot{l}({a};\rho) - \frac{\dot{l}({a};\rho)^2 }{\ell} \sin^2(\psi(a)) + \frac{\pr{\sin(\psi(a))-\cos(\psi(a))\Dro(a)}\DDro(a)}{\Dl(a;\rho)} \right) (\delta {a})^2 + o\pr{(\delta a)^2},
  \end{aligned}
  \label{eq:du_C1}
\end{equation}
\end{linenomath}
where  $\ell$ is an alias for $\overline{PC}$. We introduce this new symbol so as to make some of the results that derive from~\eqref{eq:du_C1} appear more compact. The result~\eqref{eq:du_C1} applies to both forward as well as backward peeling. In arriving at~\eqref{eq:du_C1} we used~\eqref{eq:PLDerivatives}, and the result that
\begin{linenomath}
\begin{equation}
\cos\pr{\psi(a)}=\pm \frac{1+\Dro(a)\Dro(c)}{\pr{\pr{1+\Dro(a)^2}\pr{1+\Dro(c)^2}}^\frac{1}{2}},
\label{eq:CosPsia}
\end{equation}
\end{linenomath}
where the plus sign is for the case of forward peeling, while the minus sign is for the case of backward peeling. The result~\eqref{eq:CosPsia} follows from Algorithm~\ref{algo:TruePeelingAngle}.


\subsection{The asymptotic behavior of \texorpdfstring{$\delta c$}{} as \texorpdfstring{$\delta a\to 0$}{}}


Expressing $\t{ a}$ as $a+\delta a$ and $\t{c}$ as $c+f\pr{\delta a}$, where $f$ is some real valued analytic function over $\mathbb{R}$,  in~\eqref{eq:TangentAtCprime1}, and then expanding both sides of the resulting equation in series of $\delta a$, we get that as $\delta a\to 0$
\begin{linenomath}
\begin{equation}
\left((c-a) \Dro(c)+\rho (a)-\rho
   (c)\right)+
   \left((c-a) \dot{f}(0) \DDro(c)+\Dro(a)-\Dro(c)\right)\delta a+o\pr{\delta a}=0.
\label{eq:AsymExpansion}
\end{equation}
\end{linenomath}
In arriving at~\eqref{eq:AsymExpansion} we made use of the identity that $f(0)=0$, which is a consequence of the requirement that  $\t{c}\to c$ as  $\delta a\to 0$. It follows from~\eqref{eq:TangentAtC1} and~\eqref{eq:AsymExpansion} that
\begin{linenomath}
\begin{equation}
\delta c=\frac{\Dro(a)-\Dro(c)}{\DDro(c)(a-c)} \delta a+o\pr{\delta a}.
\label{eq:DeltacOnDeltaa}
\end{equation}
\end{linenomath}

\bibliography{WavyPeelingRefs}

\begin{thebibliography}{10}

\bibitem{barthlott1997purity}
Wilhelm Barthlott and Christoph Neinhuis.
\newblock Purity of the sacred lotus, or escape from contamination in
  biological surfaces.
\newblock {\em Planta}, 202(1):1--8, 1997.

\bibitem{wen2014biomimetic}
Li~Wen, James~C Weaver, and George~V Lauder.
\newblock Biomimetic shark skin: design, fabrication and hydrodynamic function.
\newblock {\em Journal of Experimental Biology}, 217(10):1656--1666, 2014.

\bibitem{fuller1975effect}
KNG Fuller and David Tabor.
\newblock The effect of surface roughness on the adhesion of elastic solids.
\newblock {\em Proceedings of the Royal Society of London. A. Mathematical and
  Physical Sciences}, 345(1642):327--342, 1975.

\bibitem{levins1995impact}
John~M Levins and T~Kyle Vanderlick.
\newblock Impact of roughness on the deformation and adhesion of a rough metal
  and smooth mica in contact.
\newblock {\em The Journal of Physical Chemistry}, 99(14):5067--5076, 1995.

\bibitem{quon1999measurement}
RA~Quon, RF~Knarr, and TK~Vanderlick.
\newblock Measurement of the deformation and adhesion of rough solids in
  contact.
\newblock {\em The Journal of Physical Chemistry B}, 103(25):5320--5327, 1999.

\bibitem{rabinovich2000adhesion}
Yakov~I Rabinovich, Joshua~J Adler, Ali Ata, Rajiv~K Singh, and Brij~M Moudgil.
\newblock Adhesion between nanoscale rough surfaces: Ii. measurement and
  comparison with theory.
\newblock {\em Journal of Colloid and Interface Science}, 232(1):17--24, 2000.

\bibitem{briggs1977effect}
GAD Briggs and BJ~Briscoe.
\newblock The effect of surface topography on the adhesion of elastic solids.
\newblock {\em Journal of Physics D: Applied Physics}, 10(18):2453, 1977.

\bibitem{kesari2010role}
Haneesh Kesari, Joseph~C Doll, Beth~L Pruitt, Wei Cai, and Adrian~J Lew.
\newblock Role of surface roughness in hysteresis during adhesive elastic
  contact.
\newblock {\em Philosophical Magazine \& Philosophical Magazine Letters},
  90(12):891--902, 2010.

\bibitem{kesari2011effective}
Haneesh Kesari and Adrian~J Lew.
\newblock Effective macroscopic adhesive contact behavior induced by small
  surface roughness.
\newblock {\em Journal of the Mechanics and Physics of Solids},
  59(12):2488--2510, 2011.

\bibitem{deng2019depth}
Weilin Deng and Haneesh Kesari.
\newblock Depth-dependent hysteresis in adhesive elastic contacts at large
  surface roughness.
\newblock {\em Scientific Reports}, 9(1):1639, 2019.

\bibitem{deng2019effect}
Weilin Deng and Haneesh Kesari.
\newblock Effect of machine stiffness on interpreting contact
  force--indentation depth curves in adhesive elastic contact experiments.
\newblock {\em Journal of the Mechanics and Physics of Solids}, 131:404--423,
  2019.

\bibitem{deng2017molecular}
Weilin Deng and Haneesh Kesari.
\newblock Molecular statics study of depth-dependent hysteresis in nano-scale
  adhesive elastic contacts.
\newblock {\em Modelling and Simulation in Materials Science and Engineering},
  25(5):055002, 2017.

\bibitem{michel1999effective}
Jean-Claude Michel, Herv{\'e} Moulinec, and P~Suquet.
\newblock Effective properties of composite materials with periodic
  microstructure: a computational approach.
\newblock {\em Computer methods in applied mechanics and engineering},
  172(1-4):109--143, 1999.

\bibitem{ericksen2012homogenization}
Jerry~L Ericksen, David Kinderlehrer, Robert Kohn, and J-L Lions.
\newblock {\em Homogenization and effective moduli of materials and media},
  volume~1.
\newblock Springer Science \& Business Media, 2012.

\bibitem{castaneda2002second}
Pedro~Ponte Castaneda.
\newblock Second-order homogenization estimates for nonlinear composites
  incorporating field fluctuations: I--theory.
\newblock {\em Journal of the Mechanics and Physics of Solids}, 50(4):737--757,
  2002.

\bibitem{popov2017strength}
Valentin~L Popov, Roman Pohrt, and Qiang Li.
\newblock Strength of adhesive contacts: Influence of contact geometry and
  material gradients.
\newblock {\em Friction}, 5(3):308--325, 2017.

\bibitem{ciavarella2019role}
Michele Ciavarella, J~Joe, Antonio Papangelo, and JR~Barber.
\newblock The role of adhesion in contact mechanics.
\newblock {\em Journal of the Royal Society Interface}, 16(151):20180738, 2019.

\bibitem{rivlin1944the}
R.S. Rivlin.
\newblock The effective work of adhesion.
\newblock {\em Paint Technol.}, 9:215--218, 1944.

\bibitem{kendall1975thin}
K.~Kendall.
\newblock Thin-film peeling-the elastic term.
\newblock {\em Journal of Physics D: Applied Physics}, 8(13):1449, 1975.

\bibitem{johnson1971surface}
KL~Johnson, K~Kendall, and AD~Roberts.
\newblock Surface energy and the contact of elastic solids.
\newblock {\em Proceedings of the Royal Society of London A: Mathematical,
  Physical and Engineering Sciences}, 324(1558):301--313, 1971.

\bibitem{helmet}
3d printed helmet.

\bibitem{feng2007competing}
Xue Feng, Matthew~A Meitl, Audrey~M Bowen, Yonggang Huang, Ralph~G Nuzzo, and
  John~A Rogers.
\newblock Competing fracture in kinetically controlled transfer printing.
\newblock {\em Langmuir}, 23(25):12555--12560, 2007.

\bibitem{pesika2007peel}
Noshir~S Pesika, Yu~Tian, Boxin Zhao, Kenny Rosenberg, Hongbo Zeng, Patricia
  McGuiggan, Kellar Autumn, and Jacob~N Israelachvili.
\newblock Peel-zone model of tape peeling based on the gecko adhesive system.
\newblock {\em The Journal of Adhesion}, 83(4):383--401, 2007.

\bibitem{chen2008pre}
Bin Chen, Peidong Wu, and Huajian Gao.
\newblock Pre-tension generates strongly reversible adhesion of a spatula pad
  on substrate.
\newblock {\em Journal of The Royal Society Interface}, 6(35):529--537, 2008.

\bibitem{molinari2008peeling}
Alain Molinari and Guruswami Ravichandran.
\newblock Peeling of elastic tapes: effects of large deformations,
  pre-straining, and of a peel-zone model.
\newblock {\em The Journal of Adhesion}, 84(12):961--995, 2008.

\bibitem{begley2013peeling}
Matthew~R Begley, Rachel~R Collino, Jacob~N Israelachvili, and Robert~M
  McMeeking.
\newblock Peeling of a tape with large deformations and frictional sliding.
\newblock {\em Journal of the Mechanics and Physics of Solids},
  61(5):1265--1279, 2013.

\bibitem{gialamas2014peeling}
Panayiotis Gialamas, Benjamin V{\"o}lker, Rachel~R Collino, Matthew~R Begley,
  and Robert~M McMeeking.
\newblock Peeling of an elastic membrane tape adhered to a substrate by a
  uniform cohesive traction.
\newblock {\em International Journal of Solids and Structures},
  51(18):3003--3011, 2014.

\bibitem{menga2018multiple}
N.~Menga, L.~Afferrante, NM~Pugno, and G.~Carbone.
\newblock The multiple v-shaped double peeling of elastic thin films from
  elastic soft substrates.
\newblock {\em Journal of the Mechanics and Physics of Solids}, 113:56--64,
  2018.

\bibitem{kim1988elasto}
Kyung-Suk Kim and Junglhl Kim.
\newblock Elasto-plastic analysis of the peel test for thin film adhesion.
\newblock {\em Journal of Engineering Materials and Technology},
  110(3):266--273, 1988.

\bibitem{kinloch1994peeling}
A.J. Kinloch, C.C. Lau, and J.G. Williams.
\newblock The peeling of flexible laminates.
\newblock {\em International Journal of Fracture}, 66(1):45--70, 1994.

\bibitem{wei1998interface}
Yueguang Wei and John~W. Hutchinson.
\newblock Interface strength, work of adhesion and plasticity in the peel test.
\newblock {\em International Journal of Fracture}, 93(1):315--333, 1998.

\bibitem{loukis1991effects}
M.J. Loukis and N.~Aravas.
\newblock The effects of viscoelasticity in the peeling of polymeric films.
\newblock {\em The Journal of Adhesion}, 35(1):7--22, 1991.

\bibitem{afferrante2016ultratough}
L.~Afferrante and G.~Carbone.
\newblock The ultratough peeling of elastic tapes from viscoelastic substrates.
\newblock {\em Journal of the Mechanics and Physics of Solids}, 96:223--234,
  2016.

\bibitem{kendall1975control}
K.~Kendall.
\newblock Control of cracks by interfaces in composites.
\newblock {\em Proceedings of the Royal Society of London A: Mathematical,
  Physical and Engineering Sciences}, 341(1627):409--428, 1975.

\bibitem{ghatak2004peeling}
Animangsu Ghatak, L~Mahadevan, Jun~Young Chung, Manoj~K Chaudhury, and Vijay
  Shenoy.
\newblock Peeling from a biomimetically patterned thin elastic film.
\newblock {\em Proceedings of the Royal Society of London A: Mathematical,
  Physical and Engineering Sciences}, 460(2049):2725--2735, 2004.

\bibitem{chung2005roles}
Jun~Young Chung and Manoj~K Chaudhury.
\newblock Roles of discontinuities in bio-inspired adhesive pads.
\newblock {\em Journal of the Royal Society Interface}, 2(2):55--61, 2005.

\bibitem{xia2012toughening}
Shuman Xia, Laurent Ponson, Guruswami Ravichandran, and Kaushik Bhattacharya.
\newblock Toughening and asymmetry in peeling of heterogeneous adhesives.
\newblock {\em Physical Review Letters}, 108(19):196101, 2012.

\bibitem{xia2013adhesion}
S.M. Xia, L.~Ponson, G.~Ravichandran, and K.~Bhattacharya.
\newblock Adhesion of heterogeneous thin films-{I}: Elastic heterogeneity.
\newblock {\em Journal of the Mechanics and Physics of Solids}, 61(3):838--851,
  2013.

\bibitem{xia2015adhesion}
SM~Xia, Laurent Ponson, Guruswami Ravichandran, and Kaushik Bhattacharya.
\newblock Adhesion of heterogeneous thin films ii: Adhesive heterogeneity.
\newblock {\em Journal of the Mechanics and Physics of Solids}, 83:88--103,
  2015.

\bibitem{zhao2013improvement}
Hong-Ping Zhao, Yecheng Wang, Bing-Wei Li, and Xi-Qiao Feng.
\newblock Improvement of the peeling strength of thin films by a bioinspired
  hierarchical interface.
\newblock {\em International Journal of Applied Mechanics}, 5(02):1350012,
  2013.

\bibitem{ghatak2014peeling}
Animangsu Ghatak.
\newblock Peeling off an adhesive layer with spatially varying topography and
  shear modulus.
\newblock {\em Physical Review E}, 89(3):032407, 2014.

\bibitem{peng2015peeling}
Zhilong Peng and Shaohua Chen.
\newblock Peeling behavior of a thin-film on a corrugated surface.
\newblock {\em International Journal of Solids and Structures}, 60:60--65,
  2015.

\bibitem{maugis:book}
D.~Maugis.
\newblock {\em Contact adhesion and rupture of elastic solids}.
\newblock Solid State Sciences. Springer, 2000.

\end{thebibliography}

\end{document}